\newcommand{\tabitem}{~~\llap{\textbullet}~~}
\newtheorem{thm}{Theorem}\crefname{thm}{Theorem}{Theorems}
\newtheorem{lem}[thm]{Lemma}\crefname{lem}{Lemma}{Lemmas}
\newtheorem{rem}[thm]{Remark}\crefname{rem}{Remark}{Remarks}
\newtheorem{cor}[thm]{Corollary}\crefname{cor}{Corollary}{Corollaries}
\newtheorem{dfn}[thm]{Definition}\crefname{dfn}{Definition}{Definitions}
\DeclareMathOperator{\RS}{Row Span}
\DeclareMathOperator{\RM}{RM}
\DeclareMathOperator{\rank}{rank}
\DeclareMathOperator{\diag}{diag}
\renewcommand{\vec}{\mathbf}
\newcommand{\ZZ}{\mathbb Z}
\newcommand{\FF}{\mathbb F}
\newcommand{\calH}{\mathcal H}
\newcommand{\calG}{\mathcal G}
\newcommand{\calX}{\mathcal X}
\newcommand{\calZ}{\mathcal Z}
\newcommand{\utri}{unital triorthogonal }
\newcommand{\tri}{triorthogonal }
\newcommand{\Tri}{Triorthogonal }
\renewcommand{\i}{{\mathrm{i}}}
\newcommand{\ssection}[1]{\smallskip\phantomsection\addcontentsline{toc}{section}{#1}\textit{#1.---}}
\begin{document}

\title{Classification of Small \Tri Codes}
\author{Sepehr Nezami}
\affiliation{Institute for Quantum Information and Matter and Walter Burke Institute for Theoretical Physics, California Institute of Technology, Pasadena, California, USA}
\author{Jeongwan Haah}
\affiliation{Microsoft Quantum, Redmond, Washington, USA}
\begin{abstract}
Triorthogonal codes are a class of quantum error correcting codes used in magic state distillation protocols.
We classify all triorthogonal codes with $n+k\leq 38$, 
where $n$ is the number of physical qubits and $k$ is the number of logical qubits of the code. 
We find $38$ distinguished triorthogonal subspaces and 
show that every triorthogonal code with $n+k\leq 38$ 
descends from one of these subspaces through elementary operations such as puncturing and deleting qubits.
Specifically,
we associate each triorthogonal code with a Reed--Muller polynomial of weight $n+k$, 
and classify the Reed--Muller polynomials of low weight using the results of 
Kasami, Tokura, and Azumi~\cite{kasami1970weight,kasami1976weight} 
and an extensive computerized search.
In an appendix independent of the main text,
we improve a magic state distillation protocol by reducing the time variance due to stochastic Clifford corrections.
\end{abstract}

\maketitle

\section{Introduction}

A magic state is a state on one or more qubits 
with which Clifford gates and Pauli measurements complete quantum universality~\cite{knill2004,bravyi2005universal}.
Clifford operations can be implemented fault-tolerantly using Pauli stabilizer codes,
and magic states of high fidelity can be distilled using Clifford operations.
This way of achieving fault-tolerant quantum universality underlies leading proposals 
for quantum computers at scale~\cite{karzig2017scalable,Chamberland2020,Beverland2021,Bombin2021}.
However, the fault tolerance for nonClifford operations via magic state distillation 
is estimated more costly than that for Clifford operations
and hence there has been many proposals to reduce the cost;
see~\cite{Litinski2019,Chamberland2020} and references therein.

A broad class of magic state distillation protocols~\cite{knill2004,bravyi2005universal,jones2013multilevel,paetznick2013,haah2017magic,haah2017magic2,campbell2017unifying,haah2018codes}
uses so-called triorthogonal codes~\cite{bravyi2012magic}.
These are a class of CSS codes that admit transversal gates 
at one level higher in the Clifford hierarchy 
than Clifford gates, and are specified by certain cubic polynomial equations.
Even if a magic state distillation protocol 
does not nominally involve a triorthogonal code,
many protocols correspond to triorthogonal codes after some manipulation~\cite{haah2018towers}.
Given a triorthogonal code,
there are various ways to implement a magic state distillation protocol~\cite{bravyi2012magic,haah2018codes,Litinski2019}.
In all cases, if a protocol corresponds to a triorthogonal code,
the basic parameters of the triorthogonal code 
(the encoding rate and code distance)
have direct consequences in the performance of the protocol.
Hence, it is natural to seek optimal triorthogonal codes as an abstract CSS code.
A few infinite classes of triorthogonal codes are known to date~\cite{bravyi2012magic,haah2018codes,hastings2018distillation,haah2018towers},
but extremal codes (those of the best encoding rate given a code distance) 
are still poorly understood.
One could instead ask for a complete table of small triothogonal codes,
with which one would be able to optimize Clifford circuits 
implementing magic state distillation protocols.

In this paper, we give results towards the classifcation of triorthogonal codes,
which are useful at least for short length codes.
We associate an indicator polynomial to any triorthogonal matrix,
which is naturally identified with a codeword of a Reed--Muller code.
Specifically, we regard a triorthogonal matrix as a collection of column vectors,
which is then identified with the support of an indicator function.
This approach allows us to use existing classification results
on Reed--Muller codewords of small weights~\cite{kasami1970weight,kasami1976weight} 
and new computerized searches to classify all triorthogonal codes up to certain size.

Generalizing the puncturing procedure of~\cite{haah2018codes},
we focus on triorthogonal spaces rather than triorthogonal codes,
where the latter is obtained from the former by choosing a set of coordinates.
We have run a computer-assisted exhaustive search over the choices of these sets of coordinates, and report all Reed--Muller polynomials corresponding to the triorthogonal spaces and the distances of respective codes in~\cref{tab:2}. See~\cref{fig:classification} as well.

Notable new examples from our search include
codes of parameters $[[28,2,3]]$ and $[[35,3,3]]$. See~\cref{eq:gen35} for the generator matrix of the $[[35,3,3]]$ code. 
\begin{figure}[t]
    \centering
    \includegraphics[width=16.3cm]{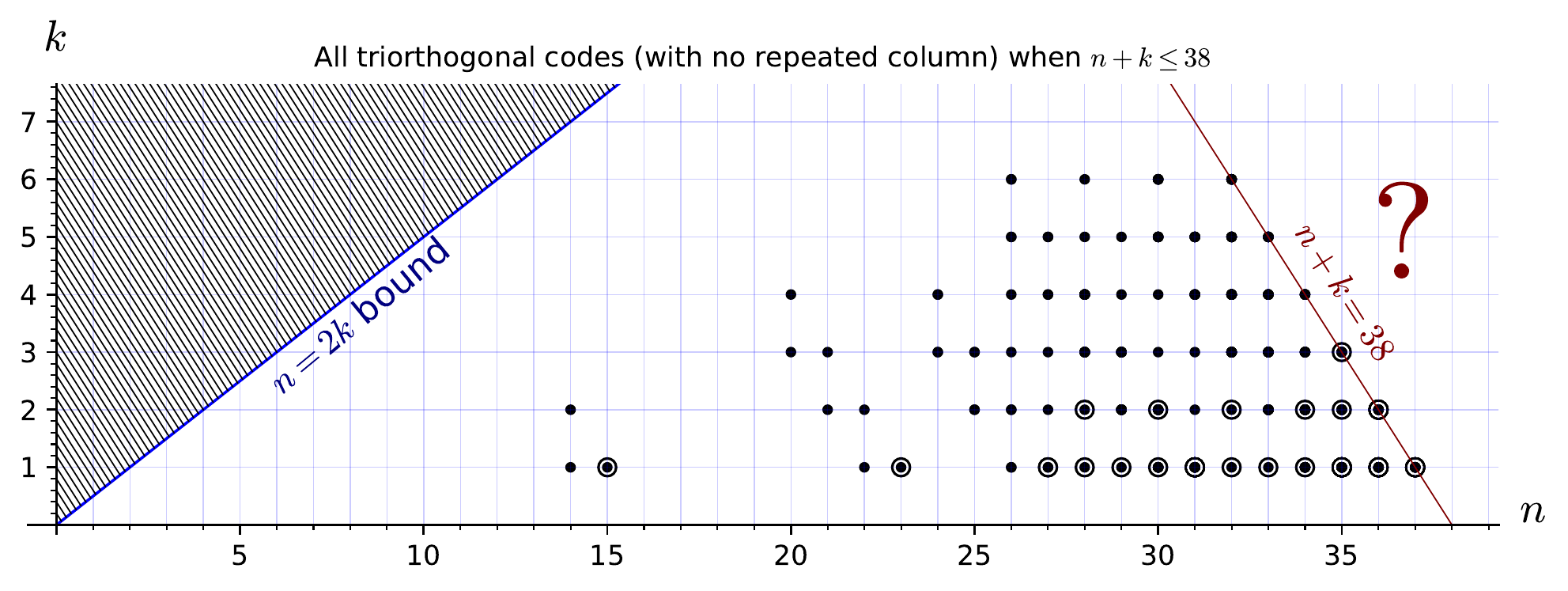}
    \caption{List of all possible pairs $(n,k)$ with $n+k\leq 38$ such that a triorthogonal code of parameters $[[n,k,d]]$ with $d\geq 2$ exist. Small solid dots indicate the cases where the \tri codes' maximum attainable distance is $2$, while the dots enclosed in a circle correspond to the cases where the maximum achievable distance is $3$. Note that there is no \tri code of distance $4$ when $n+k\leq 38$. We know that the hatched region above the $n=2k$ line contains no \tri code as a result of~\cref{lem:bound}. The region on the right of $n+k=38$ in the above figure is unexplored in our investigation, and our result is exhaustive only in the region to the left of (and including) the line $n+k=38$. We have excluded non-unital triorthogonal codes (defined in~\cref{sec:unitalcode}) and codes with repeated columns, as they can easily be constructed from the above codes and will not have better parameters (see \cref{sec:unitalcode} and~\cite{bravyi2012magic}).}
    \label{fig:classification}
\end{figure}

\begin{equation}\label{eq:gen35}
\setcounter{MaxMatrixCols}{35 }
{\scriptsize
\begin{bmatrix}
 & & & & & & & &\bm{1}&\bm{1}&\bm{1}& & & & & & &\bm{1}&\bm{1}&\bm{1}&\bm{1}&\bm{1}&\bm{1}& & & & & & &\bm{1}&\bm{1}&\bm{1}&\bm{1}&\bm{1}&\bm{1}\\
 & & & & & & & & & & &\bm{1}&\bm{1}&\bm{1}& & & & & & & & & &\bm{1}&\bm{1}&\bm{1}&\bm{1}&\bm{1}&\bm{1}&\bm{1}&\bm{1}&\bm{1}&\bm{1}&\bm{1}&\bm{1}\\
 & & & & & & & & & & & & & &\bm{1}&\bm{1}&\bm{1}&\bm{1}&\bm{1}&\bm{1}&\bm{1}&\bm{1}&\bm{1}&\bm{1}&\bm{1}&\bm{1}&\bm{1}&\bm{1}&\bm{1}&\bm{1}&\bm{1}&\bm{1}&\bm{1}&\bm{1}&\bm{1}\\\hline
\bm{1}& & & & & &\bm{1}& &\bm{1}& &\bm{1}& & & &\bm{1}& &\bm{1}&\bm{1}& & &\bm{1}&\bm{1}&\bm{1}& &\bm{1}&\bm{1}& &\bm{1}&\bm{1}&\bm{1}& & &\bm{1}& & \\
 &\bm{1}& & & & &\bm{1}& &\bm{1}&\bm{1}& & & & &\bm{1}&\bm{1}& & &\bm{1}& &\bm{1}&\bm{1}&\bm{1}&\bm{1}& &\bm{1}&\bm{1}& &\bm{1}& &\bm{1}& & &\bm{1}& \\
 & &\bm{1}& & & &\bm{1}& & &\bm{1}&\bm{1}& & & & &\bm{1}&\bm{1}& & &\bm{1}&\bm{1}&\bm{1}&\bm{1}&\bm{1}&\bm{1}& &\bm{1}&\bm{1}& & & &\bm{1}& & &\bm{1}\\
 & & &\bm{1}& & & &\bm{1}& & & &\bm{1}& &\bm{1}&\bm{1}& &\bm{1}& &\bm{1}&\bm{1}& &\bm{1}&\bm{1}&\bm{1}&\bm{1}&\bm{1}&\bm{1}& & & & &\bm{1}& &\bm{1}& \\
 & & & &\bm{1}& & &\bm{1}& & & &\bm{1}&\bm{1}& &\bm{1}&\bm{1}& &\bm{1}& &\bm{1}&\bm{1}& &\bm{1}&\bm{1}&\bm{1}&\bm{1}& &\bm{1}& & & & &\bm{1}& &\bm{1}\\
 & & & & &\bm{1}& &\bm{1}& & & & &\bm{1}&\bm{1}& &\bm{1}&\bm{1}&\bm{1}&\bm{1}& &\bm{1}&\bm{1}& &\bm{1}&\bm{1}&\bm{1}& & &\bm{1}&\bm{1}&\bm{1}& & & & \end{bmatrix}
}
\end{equation}
Furthermore, we show that there is no code of distance larger than $3$ in this regime of parameters. 

\section{Unital triorthogonal subspaces and descendant codes} \label{sec:unital}

We start with the following definition:
\begin{dfn}[Unital triorthogonal subspaces]\label{dfn:unital}
A subspace $\calH \subseteq \FF_2^c$ of dimension $r$ is \emph{triorthogonal} 
if for any three vectors $\vec u, \vec v, \vec w \in \calH$ we have $\sum_{i=1}^c \vec u^i \vec v^i \vec w^i = 0 \bmod 2$
where $\vec u^i$ denotes the $i$-th component of the vector~$\vec u$.
If~$\calH$ contains all-one vector~$\vec 1_c$, (i.e., $\vec 1_c^i = 1$ for all~$i$),
then $\calH$ is called \emph{unital}.
\end{dfn}
In the definition, the three vectors need not be distinct.
Hence, any triorthogonal subspace is always self-orthogonal.
If there is a unital triorthogonal subspace in $\FF_2^c$, then $c$ must be even because $\vec 1_c$ is orthogonal to itself.
As always in coding theory,
a permutation of coordinates is considered an equivalence transformation.
If two subspaces $\calH$ and $\calH'$ are the same up to permutations of components,
we will write
\begin{align}
    \calH \cong \calH'
\end{align}
and say that they are \emph{isomorphic}. 
We generally present the \tri subspaces as the row space of a matrix, i.e., 
\begin{align}\label{eq:genmat}
\calH = \RS (H),
\end{align}
where $H$ is called the~\emph{generator matrix} of $\calH$. Lastly, similar to the subspaces, we say two matrices are isomorphic and present it by
\begin{align}\label{eq:geniso}
    H \cong H'
\end{align}
if they can be converted to each other by a permutation of their columns.

\subsection{Descendant codes}

Let $\calH \subseteq \FF_2^c$ be a unital triorthogonal subspace.
Suppose we are given a set $P \subset \{1,2,\ldots,c\}$ of $p = |P|$ coordinate labels
such that the restriction of $\calH$ on these coordinates has dimension $p$:
to be more clear, we define a restriction linear map 
\begin{align}
\Pi_P : e_j \mapsto \begin{cases} e_j & \text{if } j \in P \\ 0, & \text{otherwise} \end{cases}
\end{align}
for all $j =1,2,\ldots,c$
where $e_j$ is the standard basis vector of $\FF_2^c$ with sole~$1$ at the $j$-th position.
Although $\Pi_P$ is a map from $\FF_2^c$ to itself,
the codomain may be regarded as $\FF_2^p$.
This amounts to the puncturing procedure for classical codes.
Under this convention,
$\Pi_P \calH$ is a subspace of $\FF_2^{p}$. 
For the rest of this manuscript, we always consider $P$ such that the codomain of $\Pi_P$ coincides its image:
\begin{align}
\Pi_P \calH = \FF_2^p. \label{eq:fullrank}
\end{align}
From $\calH$, we can define a quantum CSS code with $n$ physical qubits and $k$ logical qubits by the following procedure~\cite{haah2018codes}.
\begin{enumerate}
    \item[Even.] ($\boldsymbol{n+k=0\bmod 2}$)
        Put $k=p$ and $n = c - p$ where $p < c/2$ so that $n > k$.
        Choose any basis for $\calH$ and put it in the rows of an $r$-by-$c$ matrix $G'$.
        Bring the columns of $G'$ corresponding to $P$ to the left by a column permutation,
        and put the resulting matrix in the reduced row echelon form:
        \begin{align}
            \calH \cong \RS \left[\begin{array}{c|c}
                                    I_k&G_1  \\\hline
                                    0&G_0 
                                \end{array}  \right] \label{eq:evenDes}
        \end{align}
        By assumption \cref{eq:fullrank}, there has to be an $k$-dimensional identity matrix $I_k$ on the top left.
        
    \item[Odd.] ($\boldsymbol{n+k=1\bmod 2}$)
        Put $k = p - 1$ and $n = c - p$ where $p < (c+1)/2$ so that $n > k$.
        Choose a basis of $\calH$ by extending $\{ \vec 1 \}$
        and put the basis in the rows of an $r$-by-$c$ matrix $G'$.
        The first row of $G'$ is $\vec 1$.
        Bring the columns of $G'$ corresponding to $P$ to the left by a column permutation,
        and put all the rows but the first into the reduced row echelon form:
        \begin{align}
        \calH \cong \RS 
            \left[\begin{array}{c|c|c}
                \multicolumn{3}{c}{\vec 1_{n+k+1}}    \\\hline
                        0 & I_k&G_1 \\\hline
                        0 & 0  &G_0
                \end{array}  \right] \label{eq:oddDes}
        \end{align}
        By assumption \cref{eq:fullrank}, the top left submatrix must have rank $k+1$,
        and the Gaussian elimination reveals the displayed $k$-dimensional identity matrix.
\end{enumerate}

It is straightforward to check that the submatrix $G$ consisting of $G_0$ and $G_1$ 
satisfies the following conditions~\cite{bravyi2012magic}:
\begin{align}
    \sum_{j} G^{a,j} G^{b,j} =0 \bmod 2\quad \text{for all } a<b,\qquad \text{ and}\label{eq:orth}\\
    \sum_{j} G^{a,j} G^{b,j} G^{c,j} =0 \bmod 2\quad \text{for all }a<b<c,\label{eq:triorth}
\end{align}
where $G^{a,j}$ is the matrix element of $G$ at the $a$th row and the $j$th column.
In addition, the rows of $G_0$ have even weight, and those of $G_1$ odd.

\begin{dfn}
A binary matrix $G$ is \emph{triorthogonal} if it satisfies both \cref{eq:orth,eq:triorth}.
\end{dfn}

Now, let $\calX$ be the row span of submatrix $G_0$,
and $\calZ$ be the orthogonal complement of the rows of $G_0$ and $G_1$.
We have $\dim \calX = r - p$ and $\dim \calZ = n - k - r + p$.
Hence, the CSS code defined by $X$-stabilizers corresponding to $\calX$ and $Z$-stabilizers $\calZ$,
encodes $k$ logical qubits into $n$ qubits.
The rows of $G_1$ are orthogonal to each other and to the rows of $G_0$;
this is inherited from the self-orthogonality of $\calH$.
We make a specific choice of $X$ logical operators by declaring that each row of $G_1$ corresponds to a $X$ logical operator.
We also choose the $Z$ logical operators
by declaring that each row of $G_1$ corresponds to a $Z$ logical operator.
This choice of $X$ and $Z$ logical operators
determines a decomposition of the code space into logical qubits.
This is a generalization of the puncturing process of~\cite{haah2018codes};
the odd descendants have not been considered before.
The relevant distance for magic state distillation is the $Z$ distance,
the minimum of the weight of any nontrivial $Z$ logical operator:
\begin{align}
    d_Z = \min_{z \in G_0^\perp \setminus G^\perp} |z| .
\end{align}

Although defined in terms of a basis of $\calH$, the descendant triorthogonal codes are independent of the basis.
\begin{lem}
Any even descendant triorthogonal code and its choice of $X$ logical operators depends only on $P$ as a set,
not on the ordering of coordinates within $P$.
Any odd descendant triorthogonal code and its choice of $X$ logical operators depends only on a pair $(P, j \in P)$,
not on the ordering of coordinates within $P \setminus \{j\}$.
\end{lem}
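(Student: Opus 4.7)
The plan is to give an intrinsic, ordering-free characterization of the $X$-stabilizer space, the $Z$-stabilizer space, and each $X$ logical operator purely in terms of $\calH$, the set $P$, and (in the odd case) the marked element $j \in P$, and then observe that the RREF construction of \cref{eq:evenDes,eq:oddDes} merely picks out bases for these canonically defined objects.

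In the even case, I would set $\calX := \Pi_{P^c}(\calH \cap \ker \Pi_P) \subseteq \FF_2^{P^c}$ as the $X$-stabilizer space and $\calZ := \Pi_{P^c}(\calH)^\perp$ (orthogonal complement taken inside $\FF_2^{P^c}$) as the $Z$-stabilizer space. For each $i \in P$, the full-rank condition \cref{eq:fullrank} yields some $v_i \in \calH$ with $\Pi_P(v_i) = e_i$, and any two such choices differ by an element of $\calH \cap \ker \Pi_P$, so the coset $\Pi_{P^c}(v_i) + \calX$ is a well-defined candidate for the $X$ logical operator of logical qubit $i \in P$. All three items depend only on $\calH$ and on $P$ as a set, with logical qubits labeled by elements of $P$. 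Matching to the construction is immediate: the rows of $G_0$ span $\calX$ by the RREF structure, while the $i$-th row of $[I_k \mid G_1]$ is a specific $v_i$ with $\Pi_P(v_i) = e_i$, and its $P^c$-restriction is the declared logical operator.

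For the odd case, I would introduce $\calH_j := \calH \cap \ker \Pi_j$, a codimension-$1$ subspace of $\calH$ canonically determined by $j$ (and not containing $\vec 1$). The analogous intrinsic data are $\calX := \Pi_{P^c}(\calH \cap \ker \Pi_P)$ (using $\ker \Pi_P = \ker \Pi_j \cap \ker \Pi_{P \setminus \{j\}}$), $\calZ := \Pi_{P^c}(\calH_j)^\perp$, and, for each $i \in P \setminus \{j\}$, the $X$ logical operator of logical qubit $i$ is the coset $\Pi_{P^c}(v_i) + \calX$ for any $v_i \in \calH_j$ with $\Pi_{P \setminus \{j\}}(v_i) = e_i$. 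Everything manifestly depends only on the pair $(P, j)$, never on the ordering within $P \setminus \{j\}$.

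The only nontrivial step, which I expect to be the main obstacle, is matching this intrinsic description to the RREF construction in the odd case: a priori a basis extension of $\{\vec 1\}$ is not unique, and one must confirm that the lower $r-1$ rows of the canonical form always span $\calH_j$ rather than some other complement of $\langle \vec 1 \rangle$ in $\calH$. The displayed form in \cref{eq:oddDes} has zeros at column $j$ throughout rows $2$ through $r$, so those rows lie in $\calH_j$, and comparing dimensions ($r - 1 = \dim \calH_j$) forces equality. (If a chosen basis extension happens to contain a vector $b_i$ with $b_i^j = 1$, one first replaces $b_i \mapsto b_i + \vec 1$ before running RREF; this does not alter $\calH$ and forces the reduced rows to land in $\calH_j$.) With this identification in hand, the intrinsic definitions coincide with the constructed $\calX$, $\calZ$, and $X$ logicals, and the lemma follows at once from the basis-freeness of the intrinsic description.
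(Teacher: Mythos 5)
Your proposal is correct, and for the stabilizer groups it is essentially the paper's argument: the paper likewise identifies $\calX$ with (the projection of) $\calH\cap\ker\Pi_P$ and $\calZ$ with the orthogonal complement of $\Pi_{P^c}\calH$, which are manifestly ordering-free. Where you diverge is in the treatment of the $X$ logical operators. The paper argues operationally: a reordering of $P$ is a column permutation on the first $k$ columns, which is compensated by a row permutation and a reverse Gauss elimination, and one checks that $G_0$ survives intact while $G_1$ changes only by additions of rows of $G_0$, i.e., by $X$ stabilizers. You instead give an intrinsic characterization of each logical operator as the coset $\Pi_{P^c}(v_i)+\calX$ with $\Pi_P(v_i)=e_i$, so well-definedness modulo stabilizers is immediate and no bookkeeping of permutations or echelon forms is needed; this also makes transparent that the logical qubits are naturally labeled by elements of $P$ rather than by row indices. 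In the odd case the paper disposes of the argument in two lines by noting that $\vec 1$ is permutation invariant, whereas you introduce $\calH_j=\calH\cap\ker\Pi_j$ and verify that the lower rows of \cref{eq:oddDes} span it; your observation that a basis extension of $\{\vec 1\}$ may contain vectors with a $1$ in column $j$, repaired by $b_i\mapsto b_i+\vec 1$, addresses a point the paper's construction glosses over (as literally stated, row-reducing only rows $2$ through $r$ need not produce the displayed form). The net effect is a proof that is slightly longer but more robust, and which doubles as a coordinate-free definition of the descendant code.
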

\begin{proof}
(Even case)
    Let $Q = \{1,2,\ldots,c\} \setminus P$ be the complementary coordinate set.
    The subspace $\calX$ is the kernel of $\Pi_P$
    and $\calZ$ is the orthogonal complement of $\Pi_Q \calH$ within $\FF_2^n$.
    This shows that the stabilizer group only depends on $P$, not on the ordering within $P$.
    A different ordering within $P$ corresponds to a permutation on the coordinates of $P$,
    which is represented by a permutation matrix multiplied on the right of $G'$, the matrix in \cref{eq:evenDes},
    that acts nontrivially only on the first $k$ columns.
    This permutation can be compensated by its inverse acting on the left of $G'$,
    permuting first $k$ rows of $G'$.
    This row permutation leads to a different matrix $G_1$ but the overall matrix remains in a row echelon form that is not necessarily reduced.
    Applying (reverse) Gauss elimination, we see that $G_0$ part remains intact,
    and $G_1$ part will be modified by $G_0$.
    As logical operators, this modification is simply multiplications by $X$ stabilizers.
    
(Odd case)
    The distinguished coordinate label $j$ determines the first column of the matrix in \cref{eq:oddDes}.
    The first column and the first row of $\vec 1$ are the only difference from the even case.
    But the vector $\vec 1$ is a permutation invariant vector, so the argument for the even case applies here.
\end{proof}

\subsection{Triorthogonal matrices to unital triorthogonal spaces}\label{sec:unitalcode}

\begin{dfn}
A triorthogonal matrix or code is \emph{unital}
if it is obtained by one of the descending procedures from a unital triorthogonal subspace.
\end{dfn}

A triorthogonal matrix might not be unital.
However, considering unital ones is not constraining as we show.
First, if $n+k$ is odd, $G$ is already an odd descendant of a unital triorthogonal space:
\begin{align}
\label{eq:padding_odd}
G=\left[\begin{array}{c}
     G_1  \\\hline
     G_0
\end{array}   \right]
\xleftarrow{\text{odd descending}}
\calH =\RS
\left[\begin{array}{c|c|c}
         \multicolumn{3}{c}{\vec 1_{n+k+1}}    \\\hline
     0 & I_k&G_1 \\\hline
     0 & 0  &G_0
\end{array}  \right].
\end{align}
Second, if $n+k$ is even, then a vector $\vec v = \vec 1_{n} + G_1^1 + \cdots + G_1^k$ has even weight
because the mod 2 weight of the sum of vectors is just the mod 2 sum of all weights of all the vectors,
and each $G_1^j$ with $j=1,\ldots,k$ has odd weight.
Hence, $\vec v$ can be adjoined to the $X$ stabilizer group $\calX$ to form a new code
unless it is already in $\calX$.
This amounts to enlarging $G_0$ with an additional row $\vec v$.
The new vector addition does not violate the triorthogonality as one can easily check.
It is straightforward to see that this addition can only increase the $Z$ distance of the code,
without changing the number of logical or physical qubits.
Note that this addition may decrease the $X$ distance since the set of all representatives of
a $X$ logical operator is enlarged;
however, we only care about $Z$ distances in this paper.
Now, this new code is unital,
\begin{align}
G=\left[\begin{array}{c}
     G_1  \\\hline
     G_0
\end{array}   \right] \xleftarrow{\text{even descending}} \calH =\RS \left[\begin{array}{c|c}
      I_k&G_1  \\\hline
     0&G_0 
\end{array}  \right] \label{eq:padding_even}
\end{align}
as the sum of $\vec v$ and the first $k$ rows of the matrix on the right-hand side
is the all-one vector $\vec 1_{k+n}$.
For this reason, we only consider unital triorthogonal codes in this paper.

\section{Connection to the Reed--Muller codewords}

\subsection{Review of binary Reed--Muller codes}
Define a set $\RM(s,m)$ to be the collection of all polynomials in $m$ binary variables of degree at most $s$, with coefficients in $\FF_2$:
\begin{align}
\RM(s,m) := \left\{p\in \FF_2[x_1,\cdots,x_m]/( x_i^2-x_i) ~ :~ \deg  p \leq s \right\}.  
\end{align}
Strictly speaking, since we are modding out the polynomial ring by the ideal $(x_i^2 - x_i)$,
the degree is not the usual one;
for us, the degree of a monomial is simply the number of variables with nonzero exponent in the monomial,
and the degree of a polynomial is the maximum degree of all nonzero monomials in the polynomial.
In this convention,
\begin{align}
    \deg (fg) \le \deg f + \deg g,
\end{align}
which may be strict even if $f \neq 0$ and $g \neq 0$.
For example, take $f = x_1$ and $g = x_1$.
An element of $\RM(s,m)$ is identified with its value list:
\begin{align}
    p \in \RM(s,m) 
    \Longleftrightarrow 
    ( p(x) \in \FF_2 ~:~ x \in \FF_2^m )
\end{align}
The collection of all value lists is the Reed--Muller code.
The minimum distance of the Reed--Muller code is~$2^{m-s}$.
The dual (orthogonal complement) of $\RM(s,m)$ is
$\RM(s,m)^\perp = \RM(m-s-1,m)$.

\subsection{Indicator polynomials}

Every binary function is uniquely specified by its support, which is the set of all inputs that evaluate to~$1$,
and any subset of a $\FF_2$-vector space specifies an \emph{indicator} function that assumes~$1$ precisely on the subset.
For a matrix $H$ that is $r$-by-$c$ 
where the first row is the all-one vector
and no columns are repeated,
we associate a unique polynomial $f$ by the following rule.
\begin{align}
    f &\Longleftrightarrow H \nonumber\\
  f(x_1,\ldots,x_{r-1}) = 1 &\Longleftrightarrow (1, x_1,\ldots,x_{r-1}) \text{ is a column of } H  \label{eq:indicator}
\end{align}
By slight abuse of language,
we call $f$ the indicator polynomial of $H$.
We will only consider matrices whose first row is all-one,
so this will not cause any confusion.

Hence, any generator matrix for a unital triorthogonal space
gives an indicator polynomial.
The number of columns in a generator matrix
is equal to the Hamming weight of the indicator polynomial viewed as a Reed--Muller codeword.

We can now characterize indicator polynomials for unital triorthogonal spaces.
\begin{lem}
Let $H$ be an $r$-by-$c$ binary matrix with no columns repeated and the first row being the all one vector,
and let $f \in \FF_2[x_1,\ldots,x_{r-1}]/(x_i^2 - x_i)$ be its indicator polynomial (in the sense of~\cref{eq:indicator}).
The row span of $H$ is unital triorthogonal
if and only if
$\deg f \le r - 5$.
\end{lem}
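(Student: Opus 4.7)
The plan is to translate the triorthogonality of $\RS(H)$ into a statement about the indicator polynomial $f$ and then apply the Reed--Muller duality $\RM(s,m)^\perp=\RM(m-s-1,m)$, recalled above, with $(s,m)=(3,r-1)$. Because the top row of $H$ is $\vec 1$ and no two columns of $H$ are repeated, the rule~\cref{eq:indicator} identifies the columns of $H$ with the support $S=\{x\in\FF_2^{r-1}:f(x)=1\}$ of $f$, and every vector $\vec u\in\RS(H)$ is the restriction to $S$ of some affine linear polynomial $g(x_1,\ldots,x_{r-1})=a_0+a_1x_1+\cdots+a_{r-1}x_{r-1}$. The choice $g\equiv 1$ recovers $\vec 1_c$, so $\RS(H)$ is automatically unital; the only remaining content of the lemma is the triorthogonality condition.

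Next I would rewrite triorthogonality in this language. For vectors $\vec u_g, \vec u_h, \vec u_\ell\in\RS(H)$ coming from affine linear polynomials $g,h,\ell$, the sum $\sum_{i=1}^c \vec u_g^i\vec u_h^i\vec u_\ell^i$ equals $\sum_{x\in\FF_2^{r-1}} f(x)g(x)h(x)\ell(x)$. Using the standard identity that $\sum_{x\in\FF_2^m} p(x)$ equals the coefficient of $x_1\cdots x_m$ in $p$ (viewed in the quotient ring $\FF_2[x_1,\ldots,x_m]/(x_i^2-x_i)$), this becomes the assertion that $f$ is orthogonal, as a Reed--Muller codeword, to the product $gh\ell$ for every triple of affine linear polynomials.

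The last step is to verify that the $\FF_2$-linear span of all such products $gh\ell$ is all of $\RM(3,r-1)$: choosing $g,h,\ell$ to be appropriate selections from $\{1,x_1,\ldots,x_{r-1}\}$ realizes each monomial of degree at most three as some $gh\ell$, and these monomials span $\RM(3,r-1)$. This spanning check is the one place where a bit of care is required; the rest of the argument is a mechanical translation through the dictionary above. Given the spanning claim, triorthogonality of $\RS(H)$ is equivalent to $f\in\RM(3,r-1)^\perp$, which by Reed--Muller duality equals $\RM(r-5,r-1)$, i.e.\ $\deg f\le r-5$.
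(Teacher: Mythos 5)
Your proposal is correct and follows essentially the same route as the paper: both translate the orthogonality and triorthogonality conditions into statements of the form $\sum_x (\text{degree} \le 3\text{ polynomial})\cdot f = 0$ and then invoke the duality $\RM(3,r-1)^\perp = \RM(r-5,r-1)$. The only cosmetic difference is that the paper checks the conditions on the rows of $H$ (a basis, using trilinearity of the triple product), whereas you check them on all elements of the row span and then observe that the products $gh\ell$ span $\RM(3,r-1)$ --- the same content.
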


It follows that the smallest unital triorthogonal space
requires $r-1=4$ variables and with indicator polynomial $f = 1$,
which is precisely $\RM(1,4)$ on $16$ bits.

\begin{proof}
For any $a=1,2,\ldots,r-1$,
the $a$-th row of $H$ is the value list of a function~$x_a f$
over the support of~$f$.
Since the weight of any row of $H$ is even, 
we have $\sum_x x_a f = 0 \bmod 2$ for any $a$ and also $\sum_x f = 0 \bmod 2$.
Moreover, the componentwise product of two rows $a$ and $b$ is the value list of $x_a f(x)  \cdot x_b f(x) = x_a x_b f(x)$.
Hence, the self-orthogonality is equivalent to $\sum_x x_a x_b f = 0 \bmod 2$.
Similarly, for a triple $a,b,c$ of rows, the triple overlap is zero if and only if $\sum_x x_a x_b x_c f = 0 \bmod 2$.

Therefore, the indicator polynomial of $H$ should be orthogonal to all polynomials with degree $\leq 3$, and we have 
\begin{align}
f\in  \RM(3,r-1)^\perp =  \RM(r-5,r-1).
\end{align}
The converse is obvious.
\end{proof}

There is no canonical choice of a generator matrix given a triorthogonal space as one may apply row operations on the generator matrix without changing its row span.
But the row operations give all possible generator matrices (up to column permutations),
so we only have to consider how an indicator polynomial transforms upon a row operation. 
Let $x \mapsto Lx+\ell$ be an invertible affine transformation on $\FF_2^{r-1}$. 
It is easy to see that for any $v \in \FF_2^{r-1}$,
\begin{align}
    \begin{pmatrix}1\\ Lv+\ell\end{pmatrix} 
    = \begin{pmatrix} 1 & 0 \\ \ell & L \end{pmatrix} \begin{pmatrix}1 \\ v\end{pmatrix} \text{ is a column of }H 
    &\Longleftrightarrow
    f(L v+\ell) = 1 \nonumber \\
    &\Longleftrightarrow
    g(v) = 1 \text{ where } g(x) = f(Lx+\ell).
\end{align}
So, any row operation on $H$ that leaves the first row intact
induces an affine transformation 
on the indicator polynomial ($f \to g$).

It is obvious that any affine transform $g$ of $f \in \RM(s,m)$ belongs again to $\RM(s,m)$.
Since affine transformations are composable and invertible,
they define an equivalence relation on $\RM(s,m)$.

\begin{cor}\label{cor:main}
The set of isomorphism classes of $r$-dimensional unital triorthogonal subspaces in $\FF_2^c$ is
in one-to-one correspondence with the affine equivalence classes of $ \RM(r-5,r-1)$ with Hamming weight $c$,
excluding those divisible by a polynomial of degree~$1$.
\end{cor}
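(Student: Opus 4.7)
The plan is to construct an explicit bijection. In one direction, map a unital triorthogonal subspace $\calH$ of dimension $r$ (understood as having distinct coordinate columns, as per the exclusion in~\cref{fig:classification}) to the indicator polynomial $f$ of any generator matrix $H$ whose first row is $\vec 1_c$; such $H$ exists because $\vec 1_c \in \calH$ by unitality, so we can extend $\{\vec 1_c\}$ to a basis. In the reverse direction, map a polynomial $f$ to the row span of the matrix $H_f$ whose columns are the distinct vectors $\binom{1}{x}$ for $x \in \mathrm{supp}(f)$. The preceding Lemma immediately places $f \in \RM(r-5, r-1)$ on the forward side and guarantees $\RS(H_f)$ is unital triorthogonal on the backward side, and the Hamming weight condition $|\mathrm{supp}(f)| = c$ is automatic from the construction.

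The first technical step is to establish that the non-divisibility hypothesis is equivalent to $H_f$ having full row rank $r$. A nontrivial linear dependence $c_0 \vec 1_c + \sum_{i=1}^{r-1} c_i R_i = \vec 0$ among the rows of $H_f$ is equivalent to $c_0 + \sum_i c_i x_i = 0$ on $\mathrm{supp}(f)$, which rearranges to $\ell(x) := 1 + c_0 + \sum_i c_i x_i$ equaling $1$ on $\mathrm{supp}(f)$. For nonzero $f$, this forces $\deg \ell = 1$ (a constant $\ell$ corresponds only to the trivial dependence); and in the function ring $\FF_2[x_1, \ldots, x_{r-1}]/(x_i^2 - x_i)$, the condition $\ell = 1$ on $\mathrm{supp}(f)$ is exactly divisibility $\ell \mid f$. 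The converse runs backwards through the same chain.

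The second step is showing both maps descend to equivalence classes and are mutually inverse. Column permutations of $\calH$ do not affect the indicator polynomial at all, since it depends only on the set of columns of the generator matrix; hence only basis changes need to be tracked. If $H$ and $H' = MH$ are two generator matrices of the same $\calH$, both with first row $\vec 1_c$ and full rank $r$, then the first row of $M$ applied to $H$ equals $\vec 1_c$; invoking the equivalence above, full-rankness forces the first row of $M$ to be $(1, 0, \ldots, 0)$. Hence $M = \begin{pmatrix} 1 & 0 \\ \ell & L \end{pmatrix}$ and induces, by the computation immediately preceding the statement, the affine transformation $f(x) \mapsto f(Lx + \ell)$. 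Running the same matrix identity in reverse shows that affine equivalent $f, f'$ produce subspaces that coincide after a column permutation, yielding injectivity on equivalence classes.

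The main obstacle is the bookkeeping around the non-divisibility condition, which must do triple duty: certifying that $\RS(H_f)$ is genuinely $r$-dimensional, being preserved under affine transformations so as to descend to a condition on equivalence classes, and rigidifying the transition matrix to the block-triangular form that makes basis changes map onto affine transformations of the indicator polynomial. Once these three aspects are recognized as manifestations of the same statement, the bijection is assembled mechanically from the ingredients already present in the excerpt.
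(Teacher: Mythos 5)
Your proposal is correct and follows essentially the same route as the paper: the crux in both is that absence of a degree-$1$ factor of $f$ is equivalent to the generator matrix having full rank $r$ (you phrase this via linear dependences among rows, the paper by normalizing the putative factor to $x_1+1$ and back, but the content is identical), combined with the observation already made before the corollary that basis changes fixing $\vec 1_c$ induce exactly the affine substitutions $f(x)\mapsto f(Lx+\ell)$. You are somewhat more explicit than the paper about well-definedness on equivalence classes and mutual inverseness, which the paper leaves implicit.
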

\begin{proof}
We have characterized the indicator polynomial $f$ for a unital triorthogonal subspace:
$f$  has to be a polynomial of degree $\le r-5$.
We have to show that $f$ gives a generator matrix of rank $r$ 
if and only if it does not have a factor of degree~$1$.

Suppose $f = u v$ where $\deg u = 1$.
Then, there is an affine transformation on variables 
after which we have $u = x_1 + 1$.
This means that $x_1 f = 0$, 
implying that the second row of the associated generator matrix $H$ 
is zero.
Hence, $H$ has rank smaller than~$r$.

Conversely, if the rank of the generator matrix is less than~$r$,
then some row becomes zero after some row operation,
which means with certain affine transformation of variables we have 
$x_{r-1} f = 0 \in \FF_2[x_1,\ldots,x_{r-1}] / (x_i^2 - x_i)$,
which is only possible if $f$ has $x_{r-1} +1$ as a factor.
\end{proof}

We also note the following facts.
\begin{lem}\label{lem:bound}
Let $G$ be a triorthogonal matrix for a triorthogonal code with $d_Z \ge 2$.
Then, 
\begin{align}
    n\geq 2k.
\end{align}
\end{lem}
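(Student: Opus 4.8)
The plan is to show that whenever a triorthogonal matrix $G$ has $Z$-distance at least $2$, the number of physical qubits $n$ is at least twice the number of logical qubits $k$. The key observation is that the $Z$-distance condition forces each column of $G$ to be distinct in a certain sense, or at least restricts how columns can repeat on the rows of $G_1$. First I would recall that $G$ decomposes into $G_1$ (the $k$ rows corresponding to logical $X$/$Z$ operators, of odd weight) on top of $G_0$ (the $n-k-\dim\calX$ rows of even weight that, together with the rows of $\calX$, generate the $X$-stabilizers), with $n$ total columns. The $Z$-distance is $d_Z = \min_{z \in G_0^\perp \setminus G^\perp} |z|$, so $d_Z \ge 2$ means there is no weight-$1$ vector $z$ that is orthogonal to all rows of $G_0$ but not orthogonal to all rows of $G$.

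The core combinatorial step: consider the $k$ rows of $G_1$ as defining, for each column index $j \in \{1,\ldots,n\}$, a binary vector $c_j \in \FF_2^k$ (the restriction of that column to the $G_1$ rows). A weight-$1$ vector $z$ supported on coordinate $j$ lies in $G_0^\perp$ iff the $j$-th entry of every row of $G_0$ is zero, and lies in $G^\perp$ iff additionally $c_j = 0$. So $d_Z \ge 2$ implies: for every column $j$, either some row of $G_0$ has a $1$ in position $j$, or $c_j \ne 0$. I would then argue that in fact the columns of $G_1$ (the vectors $c_j$) must span $\FF_2^k$ — this is because the rows of $G_1$ are linearly independent (they are part of a reduced row echelon form / a basis), so $G_1$ has rank $k$, hence its $n$ columns span $\FF_2^k$. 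But spanning alone only gives $n \ge k$; to get $n \ge 2k$ I would show that the set of columns $\{c_j\}$ must actually contain enough vectors, e.g. by showing that the all-zero column $c_j = 0$ cannot occur among columns where $G_0$ vanishes, and then use the odd-weight property of each row of $G_1$ together with self-orthogonality of the rows of $G_1$ (each row has odd weight, pairs of rows overlap evenly) to constrain the multiset of columns. Concretely, restricting attention to the submatrix $G_1$, each of its $k$ rows has odd weight and any two rows have even overlap; this is precisely the condition that the columns $c_j \in \FF_2^k$, viewed as a multiset, sum to $\vec 1_k$ (odd row weights) and that $\sum_j (c_j)_a (c_j)_b = 0$ for $a \ne b$. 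Together these say the columns, as a multiset, form a "self-dual-like" configuration, and a counting/parity argument on how many distinct nonzero vectors in $\FF_2^k$ are needed — combined with the $d_Z \ge 2$ constraint ruling out the degenerate column — should push $n$ up to $2k$.

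The cleanest route I expect is: build a $k$-by-$n$ matrix from the $G_1$ rows; the triorthogonality/even-overlap conditions plus odd row weight mean that modulo the $X$-stabilizer rows (rows of $G_0$ and of $\calX$), each logical $Z$ operator (a row of $G_1$) has weight at least $d_Z \ge 2$, but more to the point each \emph{pair} of logical qubits must be "separated." Formally, for each $a$, the row $G_1^a$ alone, and for each pair $a\neq b$ the rows $G_1^a, G_1^b$ and their sum $G_1^a + G_1^b$, must all be nontrivial $Z$ logical operators, hence of weight $\ge 2$; since $G_1^a + G_1^b$ having a column support disjoint from the individual supports is impossible when columns could collapse, one argues that the support of $G_1^a$ and $G_1^b$ cannot be "nested/equal modulo stabilizers," which forces at least two columns per logical qubit — hence $n \ge 2k$. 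The main obstacle will be making this last "two columns per logical qubit" step rigorous without over-counting: one must carefully handle columns where a row of $G_0$ is nonzero (which are exempt from the $c_j \ne 0$ requirement) and ensure that the parity conditions $\sum_j c_j = \vec 1_k$ and even pairwise overlaps genuinely prevent the $n$ columns $c_j$ from being packed into fewer than $2k$ coordinates. I would resolve this by a direct linear-algebra argument: the conditions say the columns of $G_1$, together with an appropriate fictitious normalization, span $\FF_2^k$ and that no coordinate is "dispensable," which by a dimension/pigeonhole count yields $n \ge 2k$.
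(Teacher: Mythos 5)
There is a genuine gap. You correctly isolate the parity structure of $G_1$ (each row has odd weight, distinct rows overlap evenly, i.e.\ the rows are ``orthonormal'' over $\FF_2$), but orthonormality of $k$ vectors in $\FF_2^n$ only gives $k\le n$, and no parity or pigeonhole argument on the multiset of columns of $G_1$ alone can do better: the matrix $G_1=I_k$ with $n=k$ satisfies all of your conditions (odd row weights, even pairwise overlaps, columns spanning $\FF_2^k$). So the factor of $2$ must come from $G_0$ and triorthogonality, and your proposal never brings these in except through the vague claims ``two columns per logical qubit'' and ``no coordinate is dispensable,'' which you yourself flag as the unresolved obstacle. (There is also a small logical slip: $d_Z\ge 2$ says that a coordinate $j$ on which all rows of $G_0$ vanish must have $c_j=0$ as well, not $c_j\neq 0$; combined with the harmless normalization that $G$ has no zero column, this yields that $G_0$ has no zero column.)

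The paper's proof supplies exactly the missing idea. Since $d_Z\ge 2$ forces every column of $G_0$ to be nonzero, the span of the rows of $G_0$ is supported on all $n$ coordinates, so the average weight of a vector in that span is $n/2$ and there exists $\vec t$ in the span of $G_0$ with $|\vec t|\ge n/2$. Restrict the rows $\vec g_1,\dots,\vec g_k$ of $G_1$ to the complement of the support of $\vec t$, a set of at most $n/2$ coordinates. The restricted vectors $\vec f_a$ satisfy $\vec f_a\cdot\vec f_b=\vec g_a\cdot\vec g_b-(\vec g_a\wedge\vec g_b)\cdot\vec t=\delta_{ab}\bmod 2$, where the subtracted term vanishes mod $2$ by orthogonality ($a=b$) and triorthogonality ($a\neq b$) against each row of $G_0$ appearing in $\vec t$. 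Hence the $\vec f_a$ are still orthonormal, so $k$ linearly independent vectors live in a space of dimension at most $n/2$, giving $n\ge 2k$. This restriction-to-a-heavy-stabilizer step is what converts the trivial bound $k\le n$ into $k\le n/2$, and it is absent from your proposal.
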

\begin{proof}
We may assume that no column of $G$ is zero,
and the assumption $d_Z \ge 2$ implies that no column of $G_0$ is zero.
The span $\calG_0$ of all rows of $G_0$ is supported on all $n$ components.
Since the average of the weight of all vectors in any binary vector space
is sum of the all averages of individual components,
the average weight of all vectors in $\calG_0$ is $n/2$.
Therefore, there exists a vector $\vec t$ of weight $\ge n/2$.
Let $\vec u = \vec 1_n - \vec t$, the indicator vector of zero components of $\vec t$.

For any $\vec v \in \FF_2^n$, let $\vec v \wedge \vec u$ denote the componentwise product of $\vec v$ and $\vec u$.
We may regard $\vec v \wedge \vec u$ as the restriction of $\vec v$ on the support of $\vec u$.
If $\vec g_1,\ldots, \vec g_k$ are rows of $G_1$,
then the vectors $\vec f_i = \vec g_i \wedge \vec u$ satisfy
\begin{align}
    \vec f_a \cdot \vec f_b = 
    |\vec g_a \wedge \vec g_b \wedge \vec 1| - |\vec g_a \wedge \vec g_b \wedge \vec t|
    =
    \begin{cases}
        |\vec g_a| - |\vec g_a \wedge \vec t| = 1 \bmod 2 & (a=b)\\
        |\vec g_a \wedge \vec g_b| - |\vec g_a \wedge \vec g_b \wedge \vec t| = 0 \bmod 2 & (a\neq b)
    \end{cases}.
\end{align}
Since every $\vec f_a$ is on at most $n/2$ bits,
the number $k$ of orthonormal and hence linearly independent vectors cannot exceed $n/2$.
\end{proof}
 
\begin{rem}\label{rem:dmax}
Let $\calH$ be a unital triorthogonal subspace. Define two functions of $k$:
\begin{align} 
d_{\text{max}}^{\text{even}}(k) &= \text{maximum $d_Z$ over all even descendants of $\calH$ with $k$ logical qubits.} \nonumber\\
d_{\text{max}}^{\text{odd}}(k) &= \text{maximum $d_Z$ over all odd descendants of $\calH$ with $k$ logical qubits.} \label{eq:maxdist}
\end{align}
Then, each of them is a nonincreasing function of $k$.
\end{rem}
\begin{proof}
Suppose $G_1$ and $G_0$ are the collections of odd and even weight rows, respectively, 
of a triorthogonal matrix with $k$ rows in $G_1$.
Increasing $k$ amounts to choosing a column,
permuting this column to the front,
and putting the matrix in a row echelon form such that
\begin{align}
    G = \left[\begin{array}{c}
     G_1  \\ \hline
     G_0
\end{array}   \right] 
\cong
\left[\begin{array}{c|c}
      0&G_1'  \\\hline
     1& \vec g' \\
     0& G_0' 
\end{array}  \right].
\end{align}
Suppose $\vec z$ be a minimum weight row vector that corresponds to a nontrivial $Z$ logical operator of the triorthogonal code of $G$.
By definition, $\vec z$ is orthogonal to all rows of $G_0$
but is not orthogonal to some rows of $G_1$.
Let $z^1$ be the first component of $\vec z$ 
and $\vec z'$ the rest, so that $\vec z = (z^1, \vec z')$.
Now, $\vec z'$ is orthogonal to all rows of $G_0'$
and $\vec z'$ is not orthogonal to some rows of $G_1'$.
Hence, $\vec z'$ corresponds to a nontrivial $Z$ logical operator of the new code encoding $k+1$ logical qubits.
If $z^1 = 0$, then the $Z$ distance of the new code is at most the old one.
If $z^1 = 1$, then the $Z$ distance of the new code is at most one less than the old one.
\end{proof}

\section{Reed--Muller codes of small weight}

We have so far shown that every triorthogonal code can be regarded 
as a descendant of a unial triorthogonal subspace, 
and that all unital triorthogonal subspaces are in one-to-one correspondence --- in the sense of~\cref{eq:indicator} --- 
with the affine equivalence classes of polynomials in $ \RM(r-5,r-1)$ (\cref{cor:main}).

Given an indicator polynomial $f$ of 
an $r$-dimensional unital triorthogonal subspace,
we know that $|f|$ is at least the minimum distance $2^{r-s-1}$ of $\RM(s,r-1)$ where $s = \deg f \le r - 5$.
In particular, $|f| \ge 16$.

\subsection{Codes with \texorpdfstring{$n+k\leq 30$}{}}

Kasami and Tokura~\cite{kasami1970weight} show that
every $f \in \RM(s,m)$ with $2^{m-s} \le |f| < 2^{m-s+1}$ is affine equivalent to one of the following polynomials:
\begin{align}
 &x_1 \cdots x_{s-q}(x_{s-q+1}\cdots x_s + x_{s+1}\cdots x_{s+q}), \quad \text{for } m\geq s+q \text{ and }s\geq q\geq 3,\nonumber \\
&x_1\cdots x_{s-2}(x_{s-1}x_s + x_{s+1}x_{s+2} + \cdots  + x_{s+2q-3}x_{s+2q-2})\quad \text{for }m-s+2\geq 2q\geq 2
\end{align}
Since our indicator polynomial $f$ should not be divisible by a linear factor,
the leading factor $x_1 \cdots x_{s-q}$ or $x_1\cdots x_{s-2}$ must be absent.
This implies that, in the first case, $s = q = \deg f \le m/2$,
or in the second case, $s = 2 = \deg f$.
Requiring $|f| \le 30$, we see that $m \le 8$ in the first case and $m \le 6$ in the second case.
Therefore, the following $5$ polynomials represent all affine equivalence classes
of $\RM(r-5,r-1)$ with no linear factor and of Hamming weight $\leq 30$.
 
\begin{enumerate}
    \item $p(x_1,x_2,x_3,x_4)= 1$ which corresponds to an $r = 5$ dimensional \utri subspace in $\FF_2^{16}$. This simple polynomial has the original [[15,1,3]] magic state distillation code by Bravyi and Kitaev~\cite{bravyi2005universal}, and first code of Bravyi--Haah family~\cite{bravyi2012magic} with parameters [[14,2,2]] as its even descendants. See the first row of~\cref{tab:2}.    
    \item $p(x_1,x_2,x_3,x_4,x_5,x_6)= x_1x_2 + x_3x_4$ which corresponds to a $7$-dimensional \utri subspace in $\FF_2^{24}$. This subspace contains the second code of Bravyi--Haah family~\cite{bravyi2012magic} as one of its descendants.
    \item $p(x_1,x_2,x_3,x_4,x_5,x_6)= x_1x_2 + x_3x_4 + x_5x_6$ which corresponds to a $7$-dimensional \utri subspace in $\FF_2^{28}$.
    \item $p(x_1,x_2,x_3,x_4,x_5,x_6,x_7)= x_1x_2x_3 + x_4x_5x_6$ which corresponds to an $8$-dimensional \utri subspace in $\FF_2^{28}$.
    \item $p(x_1,x_2,x_3,x_4,x_5,x_6,x_7,x_8)= x_1x_2x_3x_4 + x_5x_6x_7x_8$ which corresponds to a $9$-dimensional \utri subspace in $\FF_2^{30}$.
\end{enumerate}
With this list of polynomials, we can generate all triorthogonal subspaces in $\FF_2^c$, with $c\leq 30$ and compute the distance of all of its descendants. 
These results are reported in the first 5 rows of~\cref{tab:2}.

We observe in~\cref{tab:2} that in order for a triorthogonal code to have $d_Z \geq 2$, 
it must hold that $r-k=\rank(G_0)$ be at least $3$ for odd $r$ and at least $4$ for even $r$.
We prove this observation in the following,
which is a strengthening of a result in~\cite{bravyi2012magic}.
\begin{lem}
Let $G=\left[\begin{array}{c}
     G_1  \\\hline
     G_0
\end{array}   \right]$ be a triorthogonal matrix with the associated $Z$ distance $\geq 2$.
If $G$ is a descendant of an $r$-dimensional \utri subspace, then, 
\begin{equation}
    \rank (G_0) \geq \left \{\begin{array}{cc}
        4 & \quad\text{if }r\text{ is even} \\
        3 & \quad\text{if }r\text{ is odd}
    \end{array}\right..
\end{equation}
\end{lem}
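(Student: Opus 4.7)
The plan is to rewrite $d_Z \ge 2$ as a degree constraint on a specific restriction of the indicator polynomial $f$ of $\calH$, and compare with the a priori bound $\deg f \le r-5$ from~\cref{cor:main}. Let $m = \rank(G_0)$. In either descending procedure $m$ equals the dimension of the subspace $\calX_0 := \{h \in \calH : \Pi_P h = 0\}$, whose restriction to the coordinates in $Q = \{1,\ldots,c\} \setminus P$ is the row span of $G_0$, and $|S_P| = p = r - m$, where $S_P \subseteq \mathrm{supp}(f)$ is the set of points of $\FF_2^{r-1}$ labeling the columns in $P$. Viewing $\calX_0$ as a space of affine functions on $\FF_2^{r-1}$, let $V \subseteq \FF_2^{r-1}$ be their common zero locus; then $V$ is an affine subspace of dimension $r - 1 - m$. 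The condition $d_Z \ge 2$ forbids any column of $G_0$ from vanishing, which through the indicator correspondence is equivalent to
\[ V \cap \mathrm{supp}(f) = S_P. \]
Since $\Pi_P \calH = \FF_2^p$, the $r - m$ points of $S_P$ are affinely independent, and hence form an affine basis of $V$.

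Next I would apply an invertible affine change of coordinates on $\FF_2^{r-1}$ carrying $V$ onto a coordinate copy of $\FF_2^{r-1-m}$ and sending $S_P$ to the standard affine basis $\{0, e_1, \ldots, e_{r-1-m}\}$. Affine substitution does not increase polynomial degree, so in the new coordinates $f|_V$ becomes the indicator polynomial $g$ of this standard basis, still with $\deg g \le r - 5$. A direct combinatorial check, reducing via $y_i^2 = y_i$ and noting that for $y$ of Hamming weight $s$ the value $\sum_{|T|\text{ even}} y^T$ equals $1$ for $s \in \{0,1\}$ and $2^{s-1}$ (hence $0 \bmod 2$) for $s \ge 2$, yields
\[ g(y) = \sum_{T \subseteq \{1,\ldots,r-1-m\},\ |T| \text{ even}} y^T. \]
Reading off the highest surviving monomial gives $\deg g = r - 1 - m$ when $r - 1 - m$ is even, and $\deg g = r - 2 - m$ when $r - 1 - m$ is odd.

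Imposing $\deg g \le r - 5$ on these formulas gives $m \ge 4$ in the first case and $m \ge 3$ in the second. Using $r - 1 - m \equiv r + m + 1 \pmod 2$, a short case split on the parity of $m$ shows that if $r$ is even then both parities of $m$ force $m \ge 4$, and if $r$ is odd then both parities force $m \ge 3$. Since $\rank(G_0) = m$ in either descending procedure, this establishes the stated bound.

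The step I expect to be the main obstacle is the third one --- pinning down the closed form for $g$ and its exact degree in each parity regime. This is a short combinatorial check, but it is the heart of the argument, as the remaining steps amount to linear-algebraic bookkeeping inside the framework of~\cref{cor:main} and the two descending procedures of~\cref{sec:unital}. The conceptual insight is that the indicator polynomial of any affine basis of an $m''$-dimensional affine space has degree within one of $m''$, so the degree budget of $r-5$ in the indicator polynomial cannot accommodate small $m$.
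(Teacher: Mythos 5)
Your proof is correct and is essentially the paper's own argument in coordinate-free dress: the paper restricts the indicator polynomial by setting to zero the variables attached to the $G_0$-block of an echelon-form generator matrix, which is precisely your restriction of $f$ to the common zero locus $V$ of the kernel of $\Pi_P$ on $\calH$, and your $\sum_{|T|\ \text{even}} y^T$ is the paper's sum of delta functions $\bar x_1\cdots\bar x_{k-1}+\sum_j \bar x_1\cdots x_j\cdots \bar x_{k-1}$. The concluding comparison of the degree of this restricted polynomial against the bound $r-5$, with the attendant parity bookkeeping, is the same in both.
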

\begin{proof}
The assumption that $d_Z \ge 2$ implies that $G_0$ does not contain any zero column.
Now, let $H$ be a generator matrix for the \utri subspace that has $G$ as its even descendant;
we will treat odd descendants later.
A \utri matrix can be constructed by padding $G$ with $I_k$, 
and we change the basis of $H$ such that the first row is the all-one vector.
\[
H = \left[\begin{array}{c|c|c}
         \multicolumn{3}{c}{\vec 1_{c}}    \\\hline
     0 & I_{k-1}& \star \\\hline
     0 & 0  &G_0
\end{array}  \right].
\]
Let $p$ be the indicator polynomial of $H$.
Since no column of $G_0$ is zero, we have
\begin{align*} 
&p(x_1,\cdots,x_{k-1},0,0,\cdots ,0)=1 \\
\Leftrightarrow &
    (x_1,\cdots,x_{k-1}) \text{ is one of the first $k$ columns of $H$}\\
\Leftrightarrow &
    (x_1,\cdots,x_{k-1}) \in \{\vec 0,  e_1,\cdots, e_{k-1}  \}.
\end{align*}
It follows that the polynomial $p(x_1,\cdots,x_{k-1},0,0,\cdots ,0)$ is the sum of $k$ ``delta functions'' 
\begin{align*} 
    p(x_1,\cdots,x_{k-1},0,\cdots ,0)
    &=\bar{x}_1\bar{x}_2 \cdots \bar{x}_{k-1}  + \sum_{j=1}^{k-1} \bar{x}_1\cdots\bar x_{j-1} x_j \bar x_{j+1} \cdots \bar{x}_{k-1},
\end{align*}
where $\bar x_i := (x_i+1)$.
Therefore,
\begin{align}
    \deg (p(x_1,\cdots,x_{k-1},0,0,\cdots ,0)) = 
    \begin{cases}
        k-2 & \text{if }k\text{ is even} \\
        k-1 & \text{if }k\text{ is odd}
    \end{cases} .
\end{align}
But $\deg (p(x_1,\cdots,x_{k-1},0,0,\cdots ,0)) \leq \deg p$ obviously,
and $\deg p \le r-5$ since $H$ is unital triorthogonal.
Hence,
\begin{align}
    \rank (G_0) = r-k  \geq  \begin{cases}
        3 & \quad\text{if }k\text{ is even} \\
        4 & \quad\text{if }k\text{ is odd}
\end{cases} .
\end{align}
This completes the proof if $G$ is an even descendant.

If $G$ is an odd descendant, the dimension of the parent triorthogonal space is $r = 1 + k + \rank(G_0)$,
but we have
\[
H = \left[\begin{array}{c|c|c}
         \multicolumn{3}{c}{\vec 1_{c}}    \\\hline
     0 & I_{k}& \star \\\hline
     0 & 0  &G_0
\end{array}  \right].
\]
So, a similar argument proves the lemma.
\end{proof}

\subsection{Codes with \texorpdfstring{$30<n+k\leq 38$}{}}

In the regime where $30 < n+k \le 38$, 
we have to examine indicator polynomials $f\in \RM(r-5,r-1)$ of unital triorthogonal spaces
that have weight $< 40$.
The upper bound $40$ is equal to $\frac 5 2 d$ where $d=16$ is the minimum distance of $\RM(r-5,r-1)$.
Since we have covered in the previous subsection the case where $|f| < 2 d = 32$,
here we assume that $|f| \ge 2d = 32$.
Let us use $m = r-1$ for the number of variables, 
so our indicator polynomial $f$ is always in $\RM(m-4,m)$.

Kasami, Tokura, and Azumi show~\cite[Thm.2]{kasami1976weight} that
if all of the following four conditions are satisfied for a polynomial function $f$ over $\FF_2^{m}$, 
(i) $f$ has no linear factor, 
(ii) $2d \le |f| < \frac 5 2 d$,
(iii) $\deg f \ge 4$,
and 
(iv) $m \ge 9$,
then
$f$ is affine equivalent to one and only one of polynomials in Table~1 of \cite{kasami1976weight}.
The condition (i) is true for us since we do not want redundant rows in a generating matrix for a unital triorthogonal space.
The condition (ii) is true for us because we are restricting our scope.
The conditions (iii) and (iv) may or may not be true,
and this is the place we will use computer search.
Still, we can shrink the search space by the following argument.

We know that the degree of $f$ must be $\le m-4$ to be an indicator polynomial of a unital triorthogonal space.
Here, $m$ is at least the number of distinct variables that appear in an expression of~$f$ but can be larger.
Let us show that 
\emph{
in our classification scope where $|f| = n+k < 40$, 
it suffices to consider cases where either $m = 4 + \deg f$ or $f = f(x_1,x_2,x_3,x_4,x_5) = 1$.}
\begin{proof}
If $m \ge 6 + \deg f$, then $f \in \RM(m-6,m)$ and the weight of~$f$ is at least~$64$.
If $m = 5 + \deg f$, then $|f| \ge 32$.
In this case, if $32 < |f| < 2 \cdot 32$, then \cite[Lem.1.(2)]{kasami1976weight} implies that $|f| \ge 64 - 16 = 48$,
which is beyond our scope, and we may assume $|f| = 32$.
But, then, \cite[Lem.1.(2.1)]{kasami1976weight} implies that 
$f=1$ if it does not have a linear factor.
\end{proof}
Hence, if $m \ge 9$, we may assume that $\deg f \ge 5$ and the condition (iii) and (iv) are satisfied.
Examining \cite[Table~1]{kasami1976weight},
we find that there are only five polynomials that satisfy the condition $\deg f = m-4$.
They are reported in \cref{tab:2}, Polynomials 15, 30, 36, 37, and 38.
The case of~$f(x_1,x_2,x_3,x_4,x_5) = 1$ corresponds to Polynomial 6 in \cref{tab:2}.

Now, the remaining cases of our classification problem are when $4 < 4 + \deg f = m \le 8$.
We are going to classify polynomials $p \in \RM(4,8)$ where $\deg p = 4$ and $|p| < 40$,
but where $p$ is allowed to have a linear factor.
This is sufficient because any $f \in \RM(m-4,m)$ with $\deg f = m-4 < 4$ with $|f| < 40$ 
may be multiplied by $x_{m+1}x_{m+2}\cdots x_{8}$ to become degree $4$ and still $|x_{m+1}x_{m+2}\cdots x_{8} f| < 40$.
After finding all such polynomials $p$, 
we can remove any linear factors and recover the cases with $6$ or $7$ variables.

\cite[Theorem~1.(1)]{kasami1976weight} says that for a polynomial $p$ of degree $4$ (or larger),
if $p$ has weight less than $40$,
then $p$ is affine equivalent to a polynomial of form
\begin{align}
    p(x_1,x_2,\cdots,x_8) = x_7 g(x_1,x_2,\cdots, x_6) + x_8 h(x_1,x_2,\cdots, x_6) + x_7 x_8 u(x_1,x_2,\cdots, x_6)
\end{align}
where $\deg g\leq 3$, $\deg h \leq 3$, and $\deg u\leq 2$, or more succinctly, $g,h \in \RM(3,6), u \in \RM(2,6)$.
It is easy to see that $|p|=|g|+|h|+|g+h+u|$ by setting $x_7,x_8 = 0,1$.
Using simple change of variables (e.g., $x_7 \to x_7+x_8$ or $x_8 \to x_7+x_8$)
one can assume without loss of generality that 
\begin{align}
    |g| \leq |h| \leq |g+h+u|. \label{eq:order}
\end{align}
We call $g$ and $h$ the \emph{base} polynomials for $p$. 

Given a pair of base polynomials, 
there are only $\binom{6}{2} + \binom{6}{1} + \binom{6}{0} = 22$ monomials of degree~$2$, $1$, or~$0$ 
that can be included in the polynomial $u$. In our computer search, 
we construct all $2^{22}$ possible polynomials $u$, 
and check their Hamming weights to see if they match a target weight.
This algorithm usually finds many affine equivalent polynomials. 
Therefore, for each polynomial $p$ with a target weight, 
we perform a heuristic optimization over the affine equivalence class of $p$ 
with the goal of minimizing the number of monomials. 
In this way, we find a much smaller number of distinct polynomials, 
most of them having only a few monomials. 
See~\cite{github} for more details. 
It only remains to run the search on all possible combinations of basis polynomials $g,h\in \RM(3,6)$. 

Since $|p|=|g|+|h|+|g+h+u|\leq 38$, 
one can easily see that $|g|, |h|\leq 18$ as a consequence of~\cref{eq:order}. 
In order to see what polynomials qualify as a base pair,
we first solve a more tractable problem of classifying affine equivalence classes of low weight polynomials in $\RM(3,6)$.
The search space is much smaller in this case and with the aid of \cite[Theorem~1.(1)]{kasami1976weight} 
we can perform a full computer search and classification. 
See~\cref{tab:m6} for the results.

  \begin{table}[t]
      \centering
      \begin{tabular}{|c|l|}
      \hline
          Weight & Representatives of affine equivalence classes of $ \RM(3,6)$ \\\hline\hline
          $8$ & \tabitem $x_1x_2x_3$
                   \\\hline
          $12$ & \tabitem $x_1(x_2x_3 + x_4x_5)$
                   \\\hline
          $14$ & \tabitem $x_1x_2x_3 + x_4 x_5x_6$
                   \\\hline
          $16$ & 
         \tabitem $x_1x_2$\\&
        \tabitem $x_1(x_2 + x_3x_4)$\\&
        \tabitem $x_1(x_2 + x_3x_4 + x_5x_6)$\\&
        \tabitem $(x_1+1)x_2x_3 + x_1x_4x_5$\\&
        \tabitem $x_2x_3x_4 + x_1x_3x_5 + x_1x_2x_6$
                   \\\hline
          $18$ & 
         \tabitem $x_1x_2 + x_2x_3x_5 + x_1x_4x_6$\\&
        \tabitem $x_1x_2x_3 + x_2x_3x_4 + x_1x_2x_5 + x_1x_3x_6 + x_4x_5x_6$\\\hline
           \end{tabular}
      \caption{
      List of representative polynomials of affine equivalence classes of $ \RM(3,6)$ with weight less than or equal to $18$. 
      This list has been constructed using a computer search 
      over the polynomials of the form $x_1 g(x_3,x_4,x_5,x_6)+x_2 h(x_3,x_4,x_5,x_6)+x_1x_2 u(x_3,x_4,x_5,x_6)$ 
      with $\deg g,\deg h\leq 2$, and $\deg u\leq 1$. 
      We know from \cite[Thm.~1]{kasami1976weight} that all elements of $\RM(3,6)$ are affine equivalent 
      to a polynomial of the above form.
      } 
      \label{tab:m6}
  \end{table}

Coming back to the problem of classifying elements of $ \RM(4,8)$, 
we list all possible combinations of base pairs based on their weights, 
up to affine transformations on variables $x_1,x_2,\cdots,x_6$: 
\begin{enumerate}
    \item $|g| =0, |h| =0$, which means that both~$g$ and~$h$ are identically zero. 
    \item $|g| =0, |h| =8$, where $g=0$ and $h$ can be chosen to be $x_1 x_2 x_3$. This is because every weight $8$ element of $ \RM(3,6)$ is affine equivalent to $x_1x_2x_3$; see~\cref{tab:m6}.
    \item $|g| =0, |h| =12$, where $g=0$ and $h$ can be chosen to be $x_1(x_2x_3 + x_4x_5)$; see~\cref{tab:m6}. 
    \item $|g| =0, |h| =14$, where $g=0$ and $h$ is chosen to be $x_1x_2x_3 + x_4 x_5x_6$; see~\cref{tab:m6}.
    \item $|g| =0, |h| =16$, where $g=0$ and $h$ is one of the $5$ polynomials with weight~$16$ in~\cref{tab:m6}.
    \item $|g| =0, |h| =18$, where $g=0$ and $h$ is one of the $2$ polynomials with weight~$18$ in~\cref{tab:m6}. 
    Since $|p| = |h|+|h+u|$, and $|h+u|\geq|h|$, we should only consider these base polynomials for~$p$ with $|p|\geq 36$. 

    \item $|g| =8, |h| =8$. 
    In this case we can set $g=x_1x_2x_3$. 
    We only know that $h$ is affine equivalent to $x_1x_2x_3$;
    we cannot immediately set $h$ to be $x_1x_2x_3$ because we may not be able to bring both $g$ and $h$ to their canonical affine representatives simultaneously.
    However, one can see by further investigation that using affine transformations that fix $g=x_1x_2x_3$, 
    one can bring $h$ to one of at most $32$ options~%
    \footnote{
        If $h = (x_1+1)(x_2+1)(x_3+1)$, 
        the support of~$h$ is a 3-dimensional subspace $\{(0,0,0,x_4,x_5,x_6)~:~x_4,x_5,x_6 = 0,1 \}$.
        Since $g$ is affine equivalent to $h$, the support of $g$ is an affine subspace of dimension~3.
        If both affine subspaces contain the origin,
        the only invariant of the pair of the subspaces under linear (not general affine) transformations,
        is the dimension of their intersection, which can be~$0$, $1$, $2$, or~$3$.
        Given an intersection dimension, the support of~$g$ can be translated along $3$ directions normal to the support of~$h$.
        This gives $4 \cdot 2^3 = 32$ options for~$g$.
        This is an overcounting;
        if the intersection dimension is~$0$, then the support of~$g$ is $\{(x_1,x_2,x_3,0,0,0)~:~x_1,x_2,x_3 = 0,1\}$,
        but any translation of this does not change the polynomial~$h$.
    }.
    We use all of these $32$ pairs as our basis polynomials.
    
    \item $|g| =8, |h| =12$. 
    Similar to the previous case, we can set $h = x_1(x_2x_3 + x_4x_5)$. 
    Then by affine transformations that fix $h$, 
    we can bring $g$ (which itself is affine equivalent to $x_1x_2x_3$) to one of $224$ choices.
    (It might be possible to reduce this number of possibilities).
    We implement the computer search for all of these $224$ basis polynomials.
    
    \item $|g| =8, |h| =14$.
    We can set $h=x_1x_2x_3 + x_4 x_5x_6$ (see~\cref{tab:m6}), 
    and consider $h$-preserving affine transformations.
    In this case, we can find $264$ candidates for $g$.
    These basis pairs are only relevant for polynomials with $|p| = |g|+|h|+|g+h+u| \geq 8+14+14=36$.
    
    \item $|g| =12, |h| =12$.
    This case implies that $|p| = |g|+|h|+|g+h+u| \geq 12+12+12=36$.
    We set $h = x_1(x_2x_3 + x_4x_5)$, the second row of~\cref{tab:m6}.
    Using $h$-preserving affine transformations, we find $1404928$ options for $g$; 
    this is likely an overcounting.
    We use all of the $1404928$ possible polynomials as base pairs in our computer search. 
    This search is the most costly, as we have to examine more than $5 \times 10^{12}$ polynomials, 
    accounting for $2^{22}$ options for the polynomial $u$ given a base pair. 
    See our classification code~\cite{github}.
\end{enumerate}

In this way, we conclude the classification of all \utri subspaces embedded in $\FF^c_2$ with $c< 40$. 
All of the polynomials found in this search, in addition to the distances of their even descendants, 
are reported in~\cref{tab:2} and~\cref{fig:classification}. 
See~\cite{github} for explicit matrices. 
We do not report $d_{\text{max}}^{\text{odd}}(k)$ explicitly 
as we numerically observed that $d_{\text{max}}^{\text{odd}}(k) =d_{\text{max}}^{\text{even}}(k+1)$ for all polynomials. 

Our results show that there is no \tri code with distance higher than $3$ among the codes with $n+k\leq 38$. 
However, we find the smallest code with $k=3$ and $d=3$ which has $n=35$ physical qubits.
It is an even descendant of code number~33 in~\cref{tab:2} with the generator matrix~\cref{eq:gen35}.

\section{Divisibility at level 3}

A subspace $\calH \subseteq \FF_2^c$ is said to be \emph{divisible at level 3}~\cite{haah2018towers}
if there exists a vector $\vec t \in \ZZ_8^c$ with all odd entries
such that $\vec h \cdot \vec t = 0 \bmod 8$ for all $\vec h \in \calH$.
Every level~3 divisible subspace is triorthogonal,
but the converse was not known to be true.
We observe that the converse is \emph{false};
in \cref{tab:2}, Code~3, 17, 20, 23, 28, and 33 are not divisible at level~3.

To determine that those codes are not level~3 divisible but all other codes are level~3 divisible,
we develop an efficient algorithm as follows.
The level~3 divisibility is equivalent~\cite[Lem.III.2]{haah2018towers} to the following set of conditions
on a basis $\{\vec h_1,\ldots, \vec h_r \}$ of the subspace $\calH$:
\begin{enumerate}
    \item[0.] $\vec t^i = 1 \bmod 2$ for all $i$,
    \item[1.] $\vec h_a \cdot \vec t = 0 \bmod 8$ for all $a$,
    \item[2.] $(\vec h_a \wedge \vec h_b) \cdot \vec t = 0 \bmod 4$ for all $a \le b$, and
    \item[3.] $(\vec h_a \wedge \vec h_b \wedge \vec h_c) \cdot \vec t = 0 \bmod 2$ for all $a \le b \le c$.
\end{enumerate}
Given a triorthogonal subspace $\calH$, we know that the third condition is satisfied.
We work with an $r$-by-$c$ matrix $M$ in the reduced row echelon form whose rows form a basis for $\calH$,
where we assume that the left $r$-by-$r$ submatrix is the identity matrix.
Let $N$ be $\binom{r}{2}$-by-$(c-r)$ matrix 
where each row, indexed by a pair $(a,b)$ with $1 \le a < b \le r$,
is $\vec h_a \wedge \vec h_b$.
Condition~0 and~2 are that $\vec t|_{c-r}$, the restriction of $\vec t$ on the last $c-r$ entries,
should satisfy 
\begin{align}
N \cdot \vec t|_{c-r} = 0 \bmod 4 , \qquad \vec t|_{c-r}^i = 1 \bmod 2 ~\forall i.\label{eq:Nt}
\end{align}
Conversely, suppose we have $\vec t|_{c-r}$ that fulfills \eqref{eq:Nt}.
Then, we can easily find a full $\vec t$ such that $M \cdot \vec t = 0 \bmod 8$ as follows.
Let $\vec t|_r$ denote the first $r$ entries of $\vec t$.
For each row $\vec h_a$ of $M$, we have to solve an equation $\vec h_a|_r \cdot \vec t|_r = - \vec h_a|_{c-r} \cdot \vec t|_{c-r} \bmod 8$.
Since $\vec h_a|_r$ has sole nonzero entry $1$ at the $a$-th position,
this equation clearly has a solution, but we need to check if Condition~0 is fulfilled.
Since Condition~1 is fulfilled mod 2, the subvector $\vec h_a|_{c-r}$ contains an odd number of ones,
and hence $\vec h_a|_{c-r} \cdot \vec t|_{c-r}$ is odd, and therefore Condition~0 can be fulfilled.

Hence, given a triorthogonal subspace,
a desired $\vec t$ exists if and only if there is a solution to~\cref{eq:Nt}.
We know that \cref{eq:Nt} has a solution over $\FF_2$;
the triorthogonality implies that the all-one vector is a solution.
Thus, any solution over $\ZZ_4$ can only differ from this all-one vector~$\vec 1$
by some vector with even entries.
That is, we may write $\vec t|_{c-r} = \vec 1 + 2 \vec v \mod 4$ where $\vec v$ is a binary vector of length $c-r$.

Let $U$ be an integer matrix such that $U N \bmod 2$ is in the reduced row echelon form.
Since $UN \cdot \vec 1 = 0 \bmod 2$,
we know $UN \cdot \vec 1 \bmod 4$ consists of even entries.
If $UN \bmod 4$ has a row $\rho$ of all even entries,
then $\rho \cdot (\vec 1 + 2\vec v) = \rho \cdot \vec 1 \bmod 4$ for any binary vector $\vec v$.
Hence, \cref{eq:Nt} has a solution only if $\rho \cdot \vec 1 = 0 \bmod 4$ for any all-even row~$\rho$ of $UN$.
Conversely, if $\rho \cdot \vec 1 = 0 \bmod 4$ for any all-even row $\rho$ of $UN$,
then it is straightforward to find a vector $\vec v$ such that $UN \cdot (\vec 1 + 2\vec v) = 0 \bmod 4$
since for every row of $UN$ which is nonzero over $\FF_2$ 
has an odd entry such that any other entry in its column is even.

In summary, an efficient algorithm  
to test if a subspace~$\cal H \subseteq \FF_2^c$ is divisible at level $3$
is as follows.
(i) take a matrix $M$ in the reduced row echelon form over $\FF_2$ whose row span is $\calH$,
(ii) permute columns of $M$ such that the left block is the $r$-by-$r$ identity matrix,
(iii) make a binary $\binom{r}{2}$-by-$(c-r)$ matrix $N$ by enumerating entrywise products but ignoring first $r$ entries of all pairs of rows of $M$,
(iv) compute the reduced row echelon form $UN$ of $N$ over $\FF_2$,
and
(v) test the integer matrix $UN$ for each all-even row of $UN$ whether the sum of all entries of the row is zero mod 4.

\section{Conclusion}

Triorthogonal codes are a versatile class of codes for constructing magic state distillation protocols. 
In particular, they are the most general CSS codes for $T$ state distillation, 
and furthermore considering non-CSS stabilizer codes does not seem to improve codes parameters~\cite{rengaswamy2020optimality}.
We have shown that it suffices to consider unital triorthogonal codes,
characterized by the property that the parent triorthogonal space contains all-one vector.
By indexing triorthogonal codes by Reed--Muller codewords,
we have classified all unital triorthogonal codes with $k$ logical qubits on $n$ physical qubits where $n+k \le 38$.
Our classification reveals new instances such as $[[35,3,3]]$ and $[[28,2,3]]$ codes.
In addition, we have shown the limitations of \tri codes with small parameters, 
for example, that there is no code with $Z$ distance larger than~$3$ when $n+k\leq 38$ 
and that the first three Bravyi--Haah codes are extremal in this regime.

Although a triorthogonal code as an abstract CSS code is not necessarily 
tied to a magic state distillation circuit,
it serves as a basic ingredient for many distillation circuits~\cite{campbell2017unifying,haah2018codes,Litinski2019}.
Given an enveloping design of distillation protocols,
our main result~\cref{tab:2} can be used for 
selecting the most appropriate instance.

\begin{center}
\begin{longtable}{|c|l|c|c||c|c|c|c|c|c|c|c|}
\hline        
         \rule{0pt}{5ex} \multirow{2}{*}{ $\#$} &    \multirow{2}{*}{ $p(x_1,x_2,\cdots,x_{r-1})$ }& \multirow{2}{*}{ $\,\,\,r\,\,\,$} &     \multirow{2}{*}{$\,\,\,c\,\,\,$}  & \multicolumn{6}{c}{$\quad\quad d_{\text{max}}^{\text{even}}(k)$\rotatebox[origin=c]{-90}{$\quad\quad$} }&\\ 
         &&&&\rotatebox[origin=c]{-90}{$k=1\,$}&\rotatebox[origin=c]{-90}{$k=2\,$}&\rotatebox[origin=c]{-90}{$k=3\,$}&\rotatebox[origin=c]{-90}{$k=4\,$}&\rotatebox[origin=c]{-90}{$k=5\,$}&\rotatebox[origin=c]{-90}{$k=6\,$}&\rotatebox[origin=c]{-90}{$k=7\,$}\\\hline\hline
\endfirsthead
\multicolumn{4}{c}%
{\tablename\ \thetable\ -- \textit{Continued from previous page}} \\
\hline        
         \rule{0pt}{5ex} \multirow{2}{*}{$\#$} &      \multirow{2}{*}{ $p(x_1,x_2,\cdots,x_{r-1})$ }& \multirow{2}{*}{ $\,\,\,r\,\,\,$} &     \multirow{2}{*}{$\,\,\,c\,\,\,$}  & \multicolumn{6}{c}{$\quad\quad d_{\text{max}}^{\text{even}}(k)$\rotatebox[origin=c]{-90}{$\quad\quad$} }&\\ 
         &&&&\rotatebox[origin=c]{-90}{$k=1\,$}&\rotatebox[origin=c]{-90}{$k=2\,$}&\rotatebox[origin=c]{-90}{$k=3\,$}&\rotatebox[origin=c]{-90}{$k=4\,$}&\rotatebox[origin=c]{-90}{$k=5\,$}&\rotatebox[origin=c]{-90}{$k=6\,$}&\rotatebox[origin=c]{-90}{$k=7\,$}\\\hline\hline
\endhead
\hline \multicolumn{4}{r}{\textit{Continued on next page}} \\
\endfoot
\hline
\endlastfoot

 $1$&$1 $&$\,\,\,5\,\,\,$&$\,\,16\,\,$&$\,\,\,3\,\,\,$&$\,\,\,2\,\,\,$&$\,\,\,1\,\,\,$&$\,\,\,1\,\,\,$&$\,\,\,1\,\,\,$&$\,\,\,-\,\,\,$&$\,\,\,-\,\,\,$\\\hline
$2$&$x_{1 }x_{2} + x_{3 }x_{4} $&$\,\,\,7\,\,\,$&$\,\,24\,\,$&$\,\,\,3\,\,\,$&$\,\,\,2\,\,\,$&$\,\,\,2\,\,\,$&$\,\,\,2\,\,\,$&$\,\,\,1\,\,\,$&$\,\,\,1\,\,\,$&$\,\,\,1\,\,\,$\\\hline
$3$&$x_{1 }x_{2} + x_{3 }x_{4} + x_{5 }x_{6} $&$\,\,\,7\,\,\,$&$\,\,28\,\,$&$\,\,\,3\,\,\,$&$\,\,\,2\,\,\,$&$\,\,\,2\,\,\,$&$\,\,\,2\,\,\,$&$\,\,\,1\,\,\,$&$\,\,\,1\,\,\,$&$\,\,\,1\,\,\,$\\\hline
$4$&$x_{1 }x_{2 }x_{3} + x_{4 }x_{5 }x_{6} $&$\,\,\,8\,\,\,$&$\,\,28\,\,$&$\,\,\,3\,\,\,$&$\,\,\,2\,\,\,$&$\,\,\,2\,\,\,$&$\,\,\,2\,\,\,$&$\,\,\,1\,\,\,$&$\,\,\,1\,\,\,$&$\,\,\,1\,\,\,$\\\hline
$5$&$x_{1 }x_{2 }x_{3 }x_{4} + x_{5 }x_{6 }x_{7 }x_{8} $&$\,\,\,9\,\,\,$&$\,\,30\,\,$&$\,\,\,3\,\,\,$&$\,\,\,3\,\,\,$&$\,\,\,2\,\,\,$&$\,\,\,2\,\,\,$&$\,\,\,1\,\,\,$&$\,\,\,1\,\,\,$&$\,\,\,1\,\,\,$\\\hline
$6$&$1 $&$\,\,\,6\,\,\,$&$\,\,32\,\,$&$\,\,\,3\,\,\,$&$\,\,\,2\,\,\,$&$\,\,\,1\,\,\,$&$\,\,\,1\,\,\,$&$\,\,\,1\,\,\,$&$\,\,\,1\,\,\,$&$\,\,\,-\,\,\,$\\\hline
$7$&$x_{1 }x_{2} + x_{3} $&$\,\,\,7\,\,\,$&$\,\,32\,\,$&$\,\,\,3\,\,\,$&$\,\,\,2\,\,\,$&$\,\,\,2\,\,\,$&$\,\,\,2\,\,\,$&$\,\,\,1\,\,\,$&$\,\,\,1\,\,\,$&$\,\,\,1\,\,\,$\\\hline
$8$&$x_{2 }x_{3} + x_{1 }x_{4} + x_{5} $&$\,\,\,7\,\,\,$&$\,\,32\,\,$&$\,\,\,3\,\,\,$&$\,\,\,2\,\,\,$&$\,\,\,2\,\,\,$&$\,\,\,2\,\,\,$&$\,\,\,1\,\,\,$&$\,\,\,1\,\,\,$&$\,\,\,1\,\,\,$\\\hline
$9$&$(x_{1} + 1) x_{2 }x_{3} + x_{1 }x_{4 }x_{5} $&$\,\,\,8\,\,\,$&$\,\,32\,\,$&$\,\,\,3\,\,\,$&$\,\,\,2\,\,\,$&$\,\,\,2\,\,\,$&$\,\,\,2\,\,\,$&$\,\,\,1\,\,\,$&$\,\,\,1\,\,\,$&$\,\,\,1\,\,\,$\\\hline
$10$&$x_{1 }x_{3 }x_{4} + x_{1 }x_{2 }x_{5} + x_{2 }x_{3 }x_{6} $&$\,\,\,8\,\,\,$&$\,\,32\,\,$&$\,\,\,3\,\,\,$&$\,\,\,2\,\,\,$&$\,\,\,2\,\,\,$&$\,\,\,2\,\,\,$&$\,\,\,1\,\,\,$&$\,\,\,1\,\,\,$&$\,\,\,1\,\,\,$\\\hline
$11$&$x_{1 }x_{2 }x_{4} + x_{1 }x_{5 }x_{6} + x_{2 }x_{3 }x_{7} $&$\,\,\,8\,\,\,$&$\,\,32\,\,$&$\,\,\,3\,\,\,$&$\,\,\,2\,\,\,$&$\,\,\,2\,\,\,$&$\,\,\,2\,\,\,$&$\,\,\,1\,\,\,$&$\,\,\,1\,\,\,$&$\,\,\,1\,\,\,$\\\hline
$12$&$x_{1 }x_{2 }x_{3 }x_{4} + x_{1 }x_{2 }x_{3} + (x_{1 }x_{2} + x_{3 }x_{4}) x_{5 }x_{6} $&$\,\,\,9\,\,\,$&$\,\,32\,\,$&$\,\,\,3\,\,\,$&$\,\,\,2\,\,\,$&$\,\,\,2\,\,\,$&$\,\,\,2\,\,\,$&$\,\,\,2\,\,\,$&$\,\,\,2\,\,\,$&$\,\,\,1\,\,\,$\\\hline
$13$&$x_{1 }x_{2 }x_{3 }x_{4} + x_{2 }x_{3 }x_{4 }x_{5} + x_{1 }x_{5 }x_{6 }x_{7} $&$\,\,\,9\,\,\,$&$\,\,32\,\,$&$\,\,\,3\,\,\,$&$\,\,\,2\,\,\,$&$\,\,\,2\,\,\,$&$\,\,\,2\,\,\,$&$\,\,\,1\,\,\,$&$\,\,\,1\,\,\,$&$\,\,\,1\,\,\,$\\\hline
$14$&$x_{1 }x_{2 }x_{3 }x_{5} + x_{3 }x_{4 }x_{5 }x_{7} + (x_{1 }x_{2 }x_{4} + x_{1 }x_{2}) x_{6} $&$\,\,\,9\,\,\,$&$\,\,32\,\,$&$\,\,\,3\,\,\,$&$\,\,\,2\,\,\,$&$\,\,\,2\,\,\,$&$\,\,\,2\,\,\,$&$\,\,\,2\,\,\,$&$\,\,\,2\,\,\,$&$\,\,\,1\,\,\,$\\\hline
$15$&$x_{1 }x_{2 }x_{3 }x_{4 }x_{5} + (x_{1} + 1) x_{6 }x_{7 }x_{8 }x_{9} $&$\,\,\,10\,\,\,$&$\,\,32\,\,$&$\,\,\,3\,\,\,$&$\,\,\,3\,\,\,$&$\,\,\,2\,\,\,$&$\,\,\,2\,\,\,$&$\,\,\,1\,\,\,$&$\,\,\,1\,\,\,$&$\,\,\,1\,\,\,$\\\hline
$16$&$x_{1 }x_{2 }x_{3 }x_{4} + (x_{1 }x_{2} + x_{5 }x_{6}) x_{7 }x_{8} $&$\,\,\,9\,\,\,$&$\,\,34\,\,$&$\,\,\,3\,\,\,$&$\,\,\,3\,\,\,$&$\,\,\,2\,\,\,$&$\,\,\,2\,\,\,$&$\,\,\,2\,\,\,$&$\,\,\,2\,\,\,$&$\,\,\,1\,\,\,$\\\hline
$17$&$x_{1 }x_{3} + x_{4 }x_{5} + x_{2 }x_{6} + 1 $&$\,\,\,7\,\,\,$&$\,\,36\,\,$&$\,\,\,3\,\,\,$&$\,\,\,2\,\,\,$&$\,\,\,2\,\,\,$&$\,\,\,2\,\,\,$&$\,\,\,1\,\,\,$&$\,\,\,1\,\,\,$&$\,\,\,1\,\,\,$\\\hline
$18$&$x_{2 }x_{3 }x_{5} + x_{1 }x_{4 }x_{6} + x_{1 }x_{2} $&$\,\,\,8\,\,\,$&$\,\,36\,\,$&$\,\,\,3\,\,\,$&$\,\,\,2\,\,\,$&$\,\,\,2\,\,\,$&$\,\,\,2\,\,\,$&$\,\,\,1\,\,\,$&$\,\,\,1\,\,\,$&$\,\,\,1\,\,\,$\\\hline
$19$&$x_{2 }x_{3 }x_{5} + x_{1 }x_{6 }x_{7} + (x_{1 }x_{2} + x_{1 }x_{3}) x_{4} $&$\,\,\,8\,\,\,$&$\,\,36\,\,$&$\,\,\,3\,\,\,$&$\,\,\,2\,\,\,$&$\,\,\,2\,\,\,$&$\,\,\,2\,\,\,$&$\,\,\,1\,\,\,$&$\,\,\,1\,\,\,$&$\,\,\,1\,\,\,$\\\hline
$20$&$x_{2 }x_{3 }x_{4} + x_{1 }x_{3 }x_{5} + x_{1 }x_{2 }x_{6} + x_{1 }x_{4 }x_{7} $&$\,\,\,8\,\,\,$&$\,\,36\,\,$&$\,\,\,3\,\,\,$&$\,\,\,2\,\,\,$&$\,\,\,2\,\,\,$&$\,\,\,2\,\,\,$&$\,\,\,1\,\,\,$&$\,\,\,1\,\,\,$&$\,\,\,1\,\,\,$\\\hline
$21$&$x_{3 }x_{4 }x_{6} + x_{1 }x_{2 }x_{7} + (x_{2 }x_{3} + x_{1 }x_{4}) x_{5} $&$\,\,\,8\,\,\,$&$\,\,36\,\,$&$\,\,\,3\,\,\,$&$\,\,\,2\,\,\,$&$\,\,\,2\,\,\,$&$\,\,\,2\,\,\,$&$\,\,\,1\,\,\,$&$\,\,\,1\,\,\,$&$\,\,\,1\,\,\,$\\\hline
$22$&$x_{1 }x_{2 }x_{3} + x_{2 }x_{3 }x_{4} + x_{1 }x_{2 }x_{5} + x_{1 }x_{2} + (x_{1 }x_{3} + x_{4 }x_{5}) x_{6} $&$\,\,\,8\,\,\,$&$\,\,36\,\,$&$\,\,\,3\,\,\,$&$\,\,\,2\,\,\,$&$\,\,\,2\,\,\,$&$\,\,\,2\,\,\,$&$\,\,\,1\,\,\,$&$\,\,\,1\,\,\,$&$\,\,\,1\,\,\,$\\\hline
\multirow{2}{*}{$\,23\,$}&$(x_{1} + x_{2}) x_{5 }x_{6} + (x_{1 }x_{2} + x_{1 }x_{3}) x_{4} $&\multirow{2}{*}{$\,\,\,8\,\,\,$}&\multirow{2}{*}{$\,\,36\,\,$}&\multirow{2}{*}{$\,\,\,3\,\,\,$}&\multirow{2}{*}{$\,\,\,2\,\,\,$}&\multirow{2}{*}{$\,\,\,2\,\,\,$}&\multirow{2}{*}{$\,\,\,2\,\,\,$}&\multirow{2}{*}{$\,\,\,1\,\,\,$}&\multirow{2}{*}{$\,\,\,1\,\,\,$}&\multirow{2}{*}{$\,\,\,1\,\,\,$}\\
&$\qquad\qquad\qquad+ (x_{2 }x_{3} + x_{3 }x_{5} + x_{4 }x_{6}) x_{7} $&&&&&&&&&\\\hline
$24$&$x_{1 }x_{2 }x_{3 }x_{4} + x_{1 }x_{2 }x_{5 }x_{6} + x_{3 }x_{4 }x_{5 }x_{7} $&$\,\,\,9\,\,\,$&$\,\,36\,\,$&$\,\,\,3\,\,\,$&$\,\,\,2\,\,\,$&$\,\,\,2\,\,\,$&$\,\,\,2\,\,\,$&$\,\,\,2\,\,\,$&$\,\,\,2\,\,\,$&$\,\,\,1\,\,\,$\\\hline
$25$&$x_{2 }x_{3 }x_{4 }x_{5} + x_{1 }x_{2 }x_{4 }x_{7} + x_{1 }x_{3 }x_{6 }x_{8} $&$\,\,\,9\,\,\,$&$\,\,36\,\,$&$\,\,\,3\,\,\,$&$\,\,\,2\,\,\,$&$\,\,\,2\,\,\,$&$\,\,\,2\,\,\,$&$\,\,\,2\,\,\,$&$\,\,\,2\,\,\,$&$\,\,\,1\,\,\,$\\\hline
$26$&$x_{1 }x_{2 }x_{4 }x_{5} + x_{1 }x_{2 }x_{5 }x_{6} + x_{1 }x_{3 }x_{4 }x_{7} + x_{2 }x_{3 }x_{6 }x_{8} $&$\,\,\,9\,\,\,$&$\,\,36\,\,$&$\,\,\,3\,\,\,$&$\,\,\,2\,\,\,$&$\,\,\,2\,\,\,$&$\,\,\,2\,\,\,$&$\,\,\,2\,\,\,$&$\,\,\,2\,\,\,$&$\,\,\,1\,\,\,$\\\hline
$27$&$x_{1 }x_{2 }x_{3 }x_{4} + x_{1 }x_{3 }x_{5 }x_{6} + (x_{1 }x_{2 }x_{5} + x_{2 }x_{4 }x_{6}) x_{7} $&$\,\,\,9\,\,\,$&$\,\,36\,\,$&$\,\,\,3\,\,\,$&$\,\,\,2\,\,\,$&$\,\,\,2\,\,\,$&$\,\,\,2\,\,\,$&$\,\,\,2\,\,\,$&$\,\,\,2\,\,\,$&$\,\,\,1\,\,\,$\\\hline
$28$&$x_{2 }x_{3 }x_{4 }x_{5} + (x_{1} + x_{2}) x_{4 }x_{5 }x_{6} + x_{1 }x_{2 }x_{3 }x_{7} + x_{1 }x_{3 }x_{6 }x_{8} $&$\,\,\,9\,\,\,$&$\,\,36\,\,$&$\,\,\,3\,\,\,$&$\,\,\,2\,\,\,$&$\,\,\,2\,\,\,$&$\,\,\,2\,\,\,$&$\,\,\,2\,\,\,$&$\,\,\,2\,\,\,$&$\,\,\,1\,\,\,$\\\hline
$29$&$x_{1 }x_{2 }x_{4 }x_{6} + x_{1 }x_{2 }x_{3 }x_{7} + (x_{4 }x_{5 }x_{6} + x_{3 }x_{5 }x_{7}) x_{8} $&$\,\,\,9\,\,\,$&$\,\,36\,\,$&$\,\,\,3\,\,\,$&$\,\,\,2\,\,\,$&$\,\,\,2\,\,\,$&$\,\,\,2\,\,\,$&$\,\,\,2\,\,\,$&$\,\,\,2\,\,\,$&$\,\,\,1\,\,\,$\\\hline
$30$&$x_{1 }x_{2 }x_{3 }x_{4 }x_{5} + x_{3 }x_{4 }x_{6 }x_{7 }x_{8} + (x_{5} + 1) x_{6 }x_{7 }x_{8 }x_{9} $&$\,\,\,10\,\,\,$&$\,\,36\,\,$&$\,\,\,3\,\,\,$&$\,\,\,3\,\,\,$&$\,\,\,2\,\,\,$&$\,\,\,2\,\,\,$&$\,\,\,2\,\,\,$&$\,\,\,2\,\,\,$&$\,\,\,1\,\,\,$\\\hline
$31$&$x_{1 }x_{2 }x_{3 }x_{4} + x_{2 }x_{3 }x_{7 }x_{8} + (x_{1 }x_{2} + x_{1 }x_{4}) x_{5 }x_{6} $&$\,\,\,9\,\,\,$&$\,\,38\,\,$&$\,\,\,3\,\,\,$&$\,\,\,3\,\,\,$&$\,\,\,2\,\,\,$&$\,\,\,2\,\,\,$&$\,\,\,2\,\,\,$&$\,\,\,2\,\,\,$&$\,\,\,1\,\,\,$\\\hline
$32$&$x_{1 }x_{2 }x_{3 }x_{5} + x_{1 }x_{2 }x_{4 }x_{6} + x_{2 }x_{4 }x_{5 }x_{7} + x_{1 }x_{3 }x_{6 }x_{8} $&$\,\,\,9\,\,\,$&$\,\,38\,\,$&$\,\,\,3\,\,\,$&$\,\,\,3\,\,\,$&$\,\,\,2\,\,\,$&$\,\,\,2\,\,\,$&$\,\,\,2\,\,\,$&$\,\,\,2\,\,\,$&$\,\,\,1\,\,\,$\\\hline
\multirow{2}{*}{$\,33\,$}&$x_{1 }x_{2 }x_{4 }x_{5} + x_{1 }x_{4 }x_{5 }x_{6} + (x_{2 }x_{3 }x_{4} + x_{1 }x_{5 }x_{6}) x_{7}  $&\multirow{2}{*}{$\,\,\,9\,\,\,$}&\multirow{2}{*}{$\,\,38\,\,$}&\multirow{2}{*}{$\,\,\,3\,\,\,$}&\multirow{2}{*}{$\,\,\,3\,\,\,$}&\multirow{2}{*}{$\,\,\,3\,\,\,$}&\multirow{2}{*}{$\,\,\,2\,\,\,$}&\multirow{2}{*}{$\,\,\,2\,\,\,$}&\multirow{2}{*}{$\,\,\,2\,\,\,$}&\multirow{2}{*}{$\,\,\,1\,\,\,$}\\
&$\qquad\qquad\qquad\qquad+ (x_{1 }x_{2 }x_{3} + x_{1 }x_{2 }x_{6} + x_{2 }x_{3 }x_{7}) x_{8}$&&&&&&&&&\\\hline
$34$&$x_{1 }x_{2 }x_{3 }x_{4} + x_{2 }x_{4 }x_{5 }x_{6} + x_{1 }x_{5 }x_{6 }x_{7} + x_{1 }x_{3 }x_{7 }x_{8} $&$\,\,\,9\,\,\,$&$\,\,38\,\,$&$\,\,\,3\,\,\,$&$\,\,\,3\,\,\,$&$\,\,\,2\,\,\,$&$\,\,\,2\,\,\,$&$\,\,\,2\,\,\,$&$\,\,\,2\,\,\,$&$\,\,\,1\,\,\,$\\\hline
\multirow{2}{*}{$\,35\,$}&$x_{1 }x_{2 }x_{3 }x_{4} + x_{1 }x_{2 }x_{3 }x_{5} + x_{1 }x_{5 }x_{7 }x_{8}  $&\multirow{2}{*}{$\,\,\,9\,\,\,$}&\multirow{2}{*}{$\,\,38\,\,$}&\multirow{2}{*}{$\,\,\,3\,\,\,$}&\multirow{2}{*}{$\,\,\,3\,\,\,$}&\multirow{2}{*}{$\,\,\,2\,\,\,$}&\multirow{2}{*}{$\,\,\,2\,\,\,$}&\multirow{2}{*}{$\,\,\,2\,\,\,$}&\multirow{2}{*}{$\,\,\,2\,\,\,$}&\multirow{2}{*}{$\,\,\,1\,\,\,$}\\
&$\qquad\qquad\qquad\qquad+(x_{2 }x_{3 }x_{4} + x_{1 }x_{4 }x_{5}) x_{6}$&&&&&&&&&\\\hline
$36$&$x_{1 }x_{2 }x_{3 }x_{4 }x_{5} + x_{4 }x_{5 }x_{6 }x_{7 }x_{8} + x_{3 }x_{6 }x_{7 }x_{8 }x_{9} $&$\,\,\,10\,\,\,$&$\,\,38\,\,$&$\,\,\,3\,\,\,$&$\,\,\,3\,\,\,$&$\,\,\,2\,\,\,$&$\,\,\,2\,\,\,$&$\,\,\,2\,\,\,$&$\,\,\,2\,\,\,$&$\,\,\,1\,\,\,$\\\hline
\multirow{2}{*}{$\,37\,$}&$x_{1 }x_{2 }x_{3 }x_{4 }x_{5} + x_{3 }x_{4 }x_{6 }x_{7 }x_{8}  $&\multirow{2}{*}{$\,\,\,10\,\,\,$}&\multirow{2}{*}{$\,\,38\,\,$}&\multirow{2}{*}{$\,\,\,3\,\,\,$}&\multirow{2}{*}{$\,\,\,3\,\,\,$}&\multirow{2}{*}{$\,\,\,2\,\,\,$}&\multirow{2}{*}{$\,\,\,2\,\,\,$}&\multirow{2}{*}{$\,\,\,2\,\,\,$}&\multirow{2}{*}{$\,\,\,2\,\,\,$}&\multirow{2}{*}{$\,\,\,1\,\,\,$}\\
&$\qquad\qquad\qquad\qquad+ (x_{2 }x_{5 }x_{6 }x_{7} + x_{2 }x_{6 }x_{7 }x_{8}) x_{9}$&&&&&&&&&\\\hline
$38$&$x_{1 }x_{2 }x_{3 }x_{4 }x_{5 }x_{6} + (x_{10 }x_{3 }x_{4} + x_{10 }x_{5 }x_{6} + x_{10 }x_{5}) x_{7 }x_{8 }x_{9} $&$\,\,\,11\,\,\,$&$\,\,38\,\,$&$\,\,\,3\,\,\,$&$\,\,\,3\,\,\,$&$\,\,\,2\,\,\,$&$\,\,\,2\,\,\,$&$\,\,\,2\,\,\,$&$\,\,\,2\,\,\,$&$\,\,\,1\,\,\,$\\\hline

\caption{
List of affine representatives of $\RM(r-5,r-1)$ with no linear factor and of weight $c < 40$. 
Every polynomial listed above is the indicator function (in the sense of~\cref{eq:indicator}) 
of an $r$-dimensional \utri subspace embedded in $\FF_2^c$, 
with $c:=|p|$ being the Hamming weight of the polynomial as a Reed--Muller codeword. 
Every \utri code with $n+k\leq 38$ and no repeating columns 
can be constructed as a descendant of one of these \utri subspaces. 
For each $k$, we have listed $d_{\text{max}}^{\text{even}}(k)$, 
the maximum distance achieved by the even descendants of a given \utri subspace with $k$ logical qubits. 
We have observed that  $d_{\text{max}}^{\text{odd}}(k) =d_{\text{max}}^{\text{even}}(k+1)$ for codes in the table, 
and therefore the distances of odd descendants are not listed.
The first three instances of Bravyi--Haah~\cite{bravyi2012magic} codes are even descendants of codes~1, 2, and~12. 
Except Polynomials~12 and~13, 
we have checked that all \utri subspaces corresponding to the above polynomials 
have different weight enumerator functions, and therefore they are affine inequivalent.
The inequivalence of polynomials~12 and~13
is evident from the fact that they have different $d_{\text{max}}^{\text{odd}}(k)$ functions.
}
\label{tab:2}
\end{longtable}
\end{center}

\appendix

\section{Delaying Clifford corrections in magic state distillation circuits}

The standard magic state injection circuit consists of a multiqubit measurement
followed by a conditional Clifford correction.
While Pauli operators can be implemented passively by Pauli frame tracking
so they do not require any feedback from a classical controller to a quantum hardware,
the Clifford correction must be implemented on the quantum hardware by decision from the classical controller.
The cost of this feedback will be escalated for a magic state factory
because many magic states are consumed in a factory.
In this appendix, we propose a circuit that reduces the classical feedback.
In a specific protocol below, our proposal will reduce the time cost, too.

\subsection{\texorpdfstring{$\ZZ_4$}{}-valued quadratic forms}\label{sec:Z4forms}

Let $M$ be a symmetric matrix over $\ZZ_4$.
By a quadratic form $q$ by $M$, we mean the function $q: z \mapsto z M z^T \in \ZZ_4$,
where $z$ is a vector over $\FF_2$~
\footnote{
    It may look weird to mix up $\FF_2$ and $\ZZ_4$,
    but this object appears in algebraic topology and 
    is a quadratic enhancement of a mod~2 intersection form.
    Since we are not enhancing anything here,
    we choose to give a simpler name, a $\ZZ_4$-valued quadratic form.
    All material in \cref{sec:Z4forms} is well known in math literature.
    See e.g. \cite{Taylor2008}.
}.
It is to be checked whether this function is well defined 
since the coefficient ring in the domain is a quotient ring of that in the codomain.
For any integral vectors $z$ and $y$,
we see that $(z + 2y) M (z+2y)^T = z M z^T + 2 y M z^T + 2 z M y^T + 4 y M y^T = z M z^T \mod 4$.
This shows that the function $q: \FF_2^m \to \ZZ_4$ is well defined by $M$.

Although every element of $M$ belongs to $\ZZ_4$,
the off-diagonal elements basically reside in $\FF_2$ for the following reason.
If a symmetric matrix $N$ over $\ZZ_4$ has zero diagonal and even off-diagonal entries,
then $z N z^T = \sum_{a,b} z_a N_{ab} z_b = 2 \sum_{a < b} z_a N_{ab} z_b = 0 \mod 4$.
Therefore, two $\ZZ_4$-valued quadratic forms by $M$ and $M + N$ are equal.
Hence, if we are given an equation $M = M' \mod 2$ of symmetric matrices,
the $\ZZ_4$-valued quadratic form by $M$ is determined by $M'$ up to diagonal elements.
Thus it makes sense to define the $\FF_2$-rank of $q$ by the $\FF_2$-rank of $M$.

\begin{lem}\label{prop:quadratic}
    For any $\ZZ_4$-valued quadratic form $q : \FF_2^m \to \ZZ_4$,
    there exists an $r$-by-$m$ matrix $W$ and an $m$-by-$m$ diagonal matrix $D$,
    both over $\FF_2$, such that $q$ is defined by $W^T W + 2\cdot D$
    and $r_0 \le r \le r_0 + 1$, $r_0 = \mathrm{rank}_{\FF_2}(q)$.
    Here, $2 \cdot : \FF_2 \to \ZZ_4$ is the standard additive group embedding.
\end{lem}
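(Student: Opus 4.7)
My plan is to reduce the claim to a mod-2 problem about symmetric bilinear forms over $\FF_2$, solve it by the standard classification, and then adjust the diagonal mod~4 with $D$.

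\emph{Reduction.} Two symmetric matrices over $\ZZ_4$ define the same $\ZZ_4$-valued quadratic form iff they agree mod~2 off-diagonally and mod~4 on the diagonal. So the task reduces to finding $W\in\FF_2^{r\times m}$ with $W^TW = \bar M$ over $\FF_2$ (where $\bar M := M\bmod 2$) and $r\in\{r_0,r_0+1\}$; then $D_{ii} := [M_{ii}-(W^TW)_{ii}]/2 \in \FF_2$ is well-defined, since $(W^TW)_{ii}\equiv M_{ii}\pmod 2$, and completes the job.

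\emph{Classification and explicit $W$.} I would invoke the classical theorem that every symmetric bilinear form over $\FF_2$ is congruent (under $\bar M\mapsto P^T\bar MP$, $P\in\GL_m(\FF_2)$) to either (i) $I_{r_0}\oplus 0$ in the non-alternating case (some $z$ with $z\bar Mz^T\neq 0$), or (ii) an alternating form of rank $r_0=2n$ in the alternating case. Since $W\mapsto WP$ transforms $W^TW$ by the same congruence and preserves $r$, it suffices to handle a convenient representative in each class. In case (i), take $W=[I_{r_0}\mid 0]$, giving $W^TW=\bar M$ with $r=r_0$. In case (ii), let $W_0\in\FF_2^{(2n+1)\times 2n}$ have $i$-th column $e_i+e_{2n+1}$; a direct computation yields $W_0^TW_0=J_{2n}+I_{2n}$ over $\FF_2$, which is alternating (zero diagonal) and nondegenerate of rank $2n$ (the equation $(J+I)v=0$ forces every entry of $v$ to coincide with $\sum_iv_i$, hence $v=0$). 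Since any two alternating $\FF_2$-forms of the same rank are congruent, $J+I$ represents the class (ii); after padding with zero columns for the kernel part of $\bar M$, this gives $W$ with $r=2n+1=r_0+1$.

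\emph{Main obstacle.} The only subtlety is that the alternating case genuinely needs one extra row. If one tried $r=2n$, every column of $W$ would have to be self-orthogonal and hence lie in the even-weight subspace $E\subset\FF_2^{2n}$; but $\vec 1_{2n}\in E$ lies in the radical of the restricted form, so this form has rank only $2n-2$ and no Gram matrix of $E$-vectors can reach rank $2n$. Enlarging the ambient space by one coordinate makes the even-weight subspace nondegenerate and of rank $2n$, which is exactly what the extra row in my construction exploits. Once this alternating construction is in place, the diagonal correction of the first paragraph finishes the proof.
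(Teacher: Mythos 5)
Your proof is correct. The opening reduction --- realize $\bar M = M \bmod 2$ as a Gram matrix $W^TW$ over $\FF_2$ and then absorb the mod-4 discrepancy on the diagonal into $D$ --- is exactly the paper's first step, and your non-alternating case (congruence-diagonalize to $I_{r_0}\oplus 0$ and keep $r_0$ rows) coincides with the paper's. Where you genuinely diverge is the alternating case. The paper borders $M$ with a single $1$ to get a non-alternating $(m+1)\times(m+1)$ matrix of rank $r_0+1$, applies the non-alternating case to it, and then deletes the first column of the resulting $W'$; you instead exhibit the explicit $(2n+1)\times 2n$ matrix $W_0$ with columns $e_i+e_{2n+1}$, check that $W_0^TW_0=I+J$ is alternating of full rank $2n$, and transport it to $\bar M$ using the fact that alternating forms of equal rank are congruent. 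Both are sound: the paper's bordering trick needs only the diagonalizability of non-alternating symmetric forms over $\FF_2$, whereas your route additionally invokes the symplectic classification, but in exchange it produces a completely explicit $W$ and comes with your closing observation --- not in the paper and not needed for the statement, since only $r\le r_0+1$ is claimed --- that the extra row is genuinely unavoidable in the alternating case because the even-weight subspace of $\FF_2^{2n}$ contains $\vec 1$ in its radical.
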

\begin{proof}
    Let $M$ be any matrix over $\ZZ_4$ by which $q$ is defined.
    By the discussion above, it suffices to find $W$ such that $W^T W = M \mod 2$
    because we can read off the diagonal $D$ from $M - W^T W \mod 4$.
    From now on, we work over~$\FF_2$.
    
    If $M$ has $1$ in the diagonal, 
    then there is an invertible matrix $E$ such that $E^{-T} M E^{-1}$ is diagonal.
    (This is well known~\cite{albert1938symmetric,lempel1975matrix,janusz2007parametrization,haah2017magic}.)
    But over $\FF_2$ any diagonal matrix is the identity matrix with some number of trailing zeros in the diagonal.
    Hence, we may write $M = E^T \begin{pmatrix} I_{r_0} & 0 \\ 0 & 0 \end{pmatrix} E$.
    Keeping only the first $r_0$ row of $E$, we find $W$ with $r_0$ rows.
    
    If the diagonal of $M$ is zero, then we consider one-larger matrix 
    $\begin{pmatrix} 1 & 0 \\ 0 & M \end{pmatrix}$ 
    and find $W'$ to reproduce it by the argument in the previous paragraph.
    The desired $W$ is obtained by removing the first column of $W'$.
    The number of rows of $W$ is $r_0+1$.
\end{proof}
A variant of Gauss elimination can be used to find $W$ given $M$,
putting the computational complexity to $\mathcal O(m^3)$.

\subsection{Diagonal Clifford Gates}

The math below is essentially contained in~\cite[Chap.~2]{OMeara},
which can be understood if one is familiar with the correspondence 
between Clifford groups and symplectic groups~\cite{DM}.
However, we were not able to identify an exact claim in~\cite{OMeara,DM}
that gives our result.
We choose to be explicit here.

Let $Z = \ket 0 \bra 0 - \ket 1 \bra 1$ be the standard Pauli $Z$ matrix.
For any binary vector $v = (v_1,\ldots,v_m) \in \FF_2^m$,
let $Z(v)$ be the tensor product
\begin{align}
    Z(v) = \bigotimes_{j=1}^m Z^{v_j}
\end{align}
of $Z$ and the identity matrices.
By diagonal Clifford gates,
we mean any product 
\begin{align}
S(V) = \prod_{v \in V} \exp\left( \frac{\i\pi}{4} - \frac{\i\pi}{4} Z(v) \right) \label{eq:SV}
\end{align}
where $v$ ranges over some set $V \subseteq \FF_2^m$.
Since $Z$ is diagonal, any two diagonal Clifford gates commute with each other,
so the product is unambiguous.
The set of all diagonal Clifford gates includes the usual $S$ gate and the control-$Z$ gate.    
Note that
\begin{align}
    S(V)^2 = \prod_v \exp\left(\frac{\i \pi}{2}  - \frac{\i \pi}{2} Z(v) \right) = \prod_{v \in V} Z(v)
\end{align}
is a Pauli operator.

Let us examine the action of $S(V)$ more explicitly.
The following formula will be useful;
\begin{align}
    x \bmod 2 = x^2 \bmod 4
\end{align}
which is true for any integer $x$ if ``mod 2'' and ``mod 4'' 
are interpreted as the nonnegative smallest integer remainder 
after division by~$2$ and~$4$, respectively.
From now on, we identify $\FF_2$ as a subset of $\ZZ$.
By abuse of notation, let $V = (V_{ab})$ be the matrix over $\FF_2$
whose rows are vectors in the set $V \subseteq \FF_2^m$.
On an arbitrary computational basis state $\ket{z} = \ket{z_1,\ldots,z_m}$ where $z_j \in \FF_2$
we have
\begin{align}
    S(V)\ket{z}
    &=
    \prod_{a} \exp\left( \frac{\i\pi}{4} - \frac{\i\pi}{4} \prod_b (-1)^{V_{ab}z_b} \right) \ket{z}\\
    &=
    \ket{z} \prod_a \exp\left[ \frac{\i \pi}{2} \left( \sum_b V_{ab}z_b \bmod 2 \right)\right] \nonumber\\
    &= \ket z \exp\left[ \sum_a \frac{\i \pi}{2} 
    \left( \sum_b V_{ab}z_b  \right)^2 \right] \nonumber\\
    &= \ket z \exp\left[ \frac{\i \pi}{2} 
     \sum_{a,b,c} V_{ab} z_b V_{ac} z_c  \right] \nonumber\\
    &= \ket z \exp\left[ \frac{\i \pi}{2} z V^T V z^T  \right] \nonumber
\end{align}
where in the last line we used a vector-matrix notation in which $z$ is a row vector.
The expression in the exponential implies that the action of $S(V)$ is determined by
a symmetric matrix
\begin{align}
    M = V^T V
\end{align}
viewed as a function $q:\FF_2^m \to \ZZ_4$ by $q(z) = z M z^T$.

By \cref{prop:quadratic}, any such function $q$ 
can be realized by a matrix $W$ with $m+1$ or less rows and a diagonal matrix $D$ as
\begin{align}
    q(z) = z (W^T W) z^T + 2\cdot z D z^T
    \end{align}
The diagonal matrix $2\cdot D$ corresponds to $S(D)^2$ which is a tensor product of Pauli matrix $Z$.
This means that
\begin{align}
    S(V) = S(W) Z(\diag(D)).
\end{align}
Observe that $S(W)$ consists of at most $m+1$ rotation gates
whereas $S(V)$ contains $|V|$ rotation gates which can be exponentially large in $m$.

\subsection{One level higher}

An analogous calculation can be done for diagonal, Clifford-conjugated $T$ gates.
We first observe that
\begin{align}
    x \bmod 2 = 2x^3 + x^2 - 2x \bmod 8 \label{eq:mod8}
\end{align}
for any integer $x$ where ``mod 2'' and ``mod 8'' are interpreted as taking the smallest nonnegative remainder after division. 
Now, we define 
\begin{align}
T(V) = \prod_{v \in V} \exp\left( \frac{\i\pi}{8} - \frac{\i\pi}{8} Z(v) \right) 
\end{align}
where $v$ ranges over some set $V \subseteq \FF_2^m$.
Then, for any computational basis state $\ket z$ we have
\begin{align}
    &T(V)\ket{z}\nonumber\\
    &=
    \prod_{a} \exp\left( \frac{\i\pi}{8} - \frac{\i\pi}{8} \prod_b (-1)^{V_{ab}z_b} \right) \ket{z}\nonumber\\
    &=
    \ket{z} \prod_a \exp\left[ \frac{\i \pi}{4} \left( \sum_b V_{ab}z_b \bmod 2 \right)\right] \nonumber\\
    &=
    \ket{z} \prod_a \exp\left[ \frac{\i \pi}{4} \left( 
                2\sum_{b,c,d} V_{ab}z_b V_{ac}z_c V_{ad}z_d 
                +\sum_{b,c} V_{ab}z_b V_{ac}z_c
                -2\sum_{b} V_{ab}z_b
                \right)\right] \qquad \text{(using \cref{eq:mod8})}\nonumber\\
    &=
    \ket{z} \prod_a \exp\left[ \frac{\i \pi}{4} \left( 
                12\sum_{b<c<d} V_{ab}z_b V_{ac}z_c V_{ad}z_d 
                -2\sum_{b<c} V_{ab}z_b V_{ac}z_c
                +\sum_{b} V_{ab}z_b
                \right)\right] \nonumber\\
    &=
    \ket{z} \exp\left[ \i\pi  
                \left(\sum_{b<c<d} \sum_a V_{ab} V_{ac} V_{ad} z_b z_c z_d \right)
                -\frac{\i \pi}{2}\left(\sum_{b<c} \sum_a V_{ab} V_{ac} z_b z_c \right)
                +\frac{\i \pi}{4} \left(\sum_{b} \sum_a V_{ab} z_b \right)
                \right]. \label{eq:TV}
\end{align}
Note that the cubic term is a collection of $CCZ$ gates,
the quadratic term is a collection of $CS$ gate, 
and the linear term is a collection of $T$ gates.

\subsection{Application to \texorpdfstring{$T$}{}-distillation circuits}

There are many circuit implementations possible for a given triorthogonal code,
and here we consider a ``space-efficient'' one.
The general idea of this space-efficient protocol was sketched in~\cite[\S II.D]{haah2018codes},
and more explicitly written in~\cite{Litinski2019}.
The specific protocol we consider here is the following.
Let $G$ be a triorthogonal matrix with even weight rows form $G_0$ and odd weight rows $G_1$.

\begin{enumerate}
\item Prepare $\ket + ^{\otimes (k+g_0)}$ where $g_1=k, g_0$ are the numbers of rows in $G_1$ and $G_0$, respectively.
\item For each $\ell = 1,2,\ldots,n$, apply $\exp(-\i\pi \bar Z_\ell /8)$ 
where $\bar Z_\ell = Z((G_{j\ell})_j)$ is the tensor product of $Z$ for each $1$ in \emph{column} $\ell$ of $G$.
\item After a diagonal Clifford correction $S[G]$ (that is vacuous if $G$ descends from a triply even subspace; see below),
measure single-qubit $X$ on the last $g_0$ qubits corresponding to the rows of $G_0$.
\item Postselect on all $+1$ outcomes.
\item The magic states are in the first $k$ qubits corresponding to the rows of $G_1$.
\end{enumerate}
If $V$ is the collection of all \emph{columns} of $G$,
then $T(V)$ implements $T$ gates on the qubits that corresponds to the rows of $G_1$ up to a diagonal Clifford.
Indeed,
the cubic term of \cref{eq:TV} vanishes due to the triorthogonality of $G$.
The gate that implements
\begin{align}
S[G] \ket z = 
\ket{z} \exp\left[
                -\frac{\i \pi}{2}\left(\sum_{b<c} \sum_a G_{ab} G_{ac} z_b z_c \right)
                +\frac{\i \pi}{4} \left(\left(\sum_{a,b} G_{ab} z_b \right) - \left(\sum_{a,b} G_{ab} z_b \bmod 2\right)\right) 
                \right].\label{eq:SG}
\end{align}
is a diagonal Clifford.

The rotation $e^{-\i \pi \bar Z_\ell / 8}$ may not be an elementary operation,
in which case it must be induced by a $T$ injection~\cite{knill2004,bravyi2005universal}.
The $T$ injection is achieved by the following measurement sequence.
\begin{enumerate}
    \item Prepare an ancilla qubit in $T$ state $\ket 0 + e^{\i\pi/4} \ket 1$ 
    with possible noise. This can be provided by an earlier round of $T$ distillation.
    \item Measure $Z_\text{ancilla} \otimes \bar Z_\ell$ to obtain an outcome $t_\ell = \pm 1$.
    \item Measure $X_\text{ancilla}$. If the outcome is $-1$, apply $\bar Z_\ell$.
    \item If $t_\ell = -1$, apply $\exp(-\i \pi \bar Z_\ell / 4)$.
\end{enumerate}
Except for the last Clifford correction $e^{-\i\pi \bar Z_\ell / 4}$,
every measurement is a multiqubit Pauli measurement,
which can be implemented by lattice surgery techniques~\cite{Horsman2012}.
Note that the ancilla in $T$ injection is measured out before the last step,
so we may reuse it.
The last Clifford correction upon $t_\ell = -1$ can be implemented in a number of ways,
but it is essentially an $S$ injection~\cite{Litinski2019}.
Recall that in a lattice surgery architecture,
the application of a Pauli operator is always passive
and does not correspond to any action on a quantum device;
one keeps track of what Pauli frame a qubit is in
and interpret any measurement outcome in the Pauli frame.
Perhaps importantly, a measurement-outcome dependent Pauli operator
does not require any classical feedback.

In the absence of any noise, the Clifford correction is needed with probability~$\tfrac 1 2$,
and the total number of Clifford corrections in the above $T$~distillation protocol
follows a binomial distribution $B(n,\tfrac 1 2)$.
In the presence of some noise in the circuit, 
the measurement outcome distribution can be biased,
but in the regime of practical interests the bias is small.
Such a stochastic process is less favorable than a fully deterministic process
because it makes it harder to synchronize operations across the quantum device.
Moreover, the Clifford corrections depend on classical feedback
where we have to know $n$ bits where~$n$ is the number of columns of~$G$.
This might slow down the execution of the overall distillation protocol.

We propose to delay all the Clifford corrections until all input $T$ states are consumed.
This is possible since any operation on the data qubits, that corresponds to rows of $G$, 
is diagonal in the $Z$ basis, and so does the $S$ correction.
That is, we just collect all the outcomes $t_\ell$ for $\ell =1,2,\ldots,n$,
and we apply
\begin{align}
    \prod_{\ell: t_\ell = -1} \exp(-\i \pi \bar Z_\ell / 4).
\end{align}
This is in the form of $S(V)$ in \cref{eq:SV} where
\begin{align}
    V = \{ v \in \FF_2^{k+g_0} ~|~ \exists \ell : t_\ell = -1, ~v = \text{$\ell$-th column of }G\} .
\end{align}
Hence, according to \cref{prop:quadratic}, $S(V)$ can be implemented by at most $k+g_0+1$ $S$-injections.
In all triorthogonal matrices we know, $k + g_0 + 1 < n$.
For example, in $15$-to-$1$ protocol~\cite{bravyi2005universal}, $k + g_0 + 1= 6 < 15 = n$.
In $116$-to-$12$ protocol~\cite{haah2018codes}, $k + g_0 + 1= 30 < 116 = n$.

Let us flesh out the protocol. 
As before, $G$ denotes a triorthogonal matrix.
\begin{enumerate}
    \item Prepare $\ket + ^{\otimes (k+g_0)}$ where $g_1=k, g_0$ are the numbers of rows in $G_1$ and $G_0$, respectively.
    \item For each $\ell = 1,2,\ldots,n$, do the following.
    
    (i) Prepare an ancilla qubit in $T$ state $\ket 0 + e^{\i\pi/4} \ket 1$ 
    with possible noise. This can be provided by an earlier round of $T$ distillation.

    (ii) Measure $Z_\text{ancilla} \otimes \bar Z_\ell$ to obtain an outcome $t_\ell = \pm 1$.
    Here, $\bar Z_\ell = Z((G_{j\ell})_j)$ is the tensor product of $Z$ for each $1$ in \emph{column} $\ell$ of $G$.

    (iii) Measure $X_\text{ancilla}$. If the outcome is $-1$, apply $\bar Z_\ell$.

    \item Let $C$ be the collection of indices $\ell$ such that $t_\ell = -1$.
    Let $H$ be the submatrix of $G$ by choosing columns of indices in $C$,
    and let $V$ consist of all the columns of $H$ \emph{and} a set of vectors that implement $S[G]$ of \cref{eq:SG}.
    Find matrices $W$ and $D$ such that \cref{eq:SV} holds.

    \item Apply $S(W) Z(\diag(D))$.
    
    \item Measure individual $X$ on the last $g_0$ qubits corresponding to the rows of $G_0$.
    Postselect on all $+1$ outcomes.
    
    \item The magic states are in the first $k$ qubits corresponding to the rows of $G_1$.
\end{enumerate}

In the proposed protocol,
the stochastic nature of the process is not entirely eliminated,
but the number of Clifford correction is now upper bounded by $k+g_0 +1$, a smaller number than~$n$,
and classical feedback is required only once, rather than $n$ times in sequence,
in between all $T$ consumption (Step~2) and $S(V)$ application (Step~4).

If the triorthogonal code allows $T$ and $T^\dagger$ gates to induce logical $T$ gates without any further Clifford correction
(empty ``$S[G]$'')~\cite{haah2018towers},
then some Clifford correction $e^{-\i\pi \bar Z_\ell / 4}$ is called on $t_\ell = +1$ rather than $t_\ell = -1$.
Our proposal can be used in that case, too, by letting $V$ to be the collection of all needed $S$-corrections.

\ssection{Acknowledgments}
SN is supported by the Walter Burke Institute for Theoretical Physics and IQIM at Caltech. 
Part of this work was done while SN was an intern
in the Quantum Architectures and Computation group (QuArC), Microsoft Research.

\bibliography{references}

\begin{thebibliography}{27}%
\makeatletter
\providecommand \@ifxundefined [1]{%
 \@ifx{#1\undefined}
}%
\providecommand \@ifnum [1]{%
 \ifnum #1\expandafter \@firstoftwo
 \else \expandafter \@secondoftwo
 \fi
}%
\providecommand \@ifx [1]{%
 \ifx #1\expandafter \@firstoftwo
 \else \expandafter \@secondoftwo
 \fi
}%
\providecommand \natexlab [1]{#1}%
\providecommand \enquote  [1]{``#1''}%
\providecommand \bibnamefont  [1]{#1}%
\providecommand \bibfnamefont [1]{#1}%
\providecommand \citenamefont [1]{#1}%
\providecommand \href@noop [0]{\@secondoftwo}%
\providecommand \href [0]{\begingroup \@sanitize@url \@href}%
\providecommand \@href[1]{\@@startlink{#1}\@@href}%
\providecommand \@@href[1]{\endgroup#1\@@endlink}%
\providecommand \@sanitize@url [0]{\catcode `\\12\catcode `\$12\catcode
  `\&12\catcode `\#12\catcode `\^12\catcode `\_12\catcode `\%12\relax}%
\providecommand \@@startlink[1]{}%
\providecommand \@@endlink[0]{}%
\providecommand \url  [0]{\begingroup\@sanitize@url \@url }%
\providecommand \@url [1]{\endgroup\@href {#1}{\urlprefix }}%
\providecommand \urlprefix  [0]{URL }%
\providecommand \Eprint [0]{\href }%
\providecommand \doibase [0]{https://doi.org/}%
\providecommand \selectlanguage [0]{\@gobble}%
\providecommand \bibinfo  [0]{\@secondoftwo}%
\providecommand \bibfield  [0]{\@secondoftwo}%
\providecommand \translation [1]{[#1]}%
\providecommand \BibitemOpen [0]{}%
\providecommand \bibitemStop [0]{}%
\providecommand \bibitemNoStop [0]{.\EOS\space}%
\providecommand \EOS [0]{\spacefactor3000\relax}%
\providecommand \BibitemShut  [1]{\csname bibitem#1\endcsname}%
\let\auto@bib@innerbib\@empty
\bibitem [{\citenamefont {Kasami}\ and\ \citenamefont
  {Tokura}(1970)}]{kasami1970weight}%
  \BibitemOpen
  \bibfield  {author} {\bibinfo {author} {\bibfnamefont {T.}~\bibnamefont
  {Kasami}}\ and\ \bibinfo {author} {\bibfnamefont {N.}~\bibnamefont
  {Tokura}},\ }\href {https://doi.org/10.1109/TIT.1970.1054545} {\bibfield
  {journal} {\bibinfo  {journal} {IEEE Transactions on Information Theory}\
  }\textbf {\bibinfo {volume} {16}},\ \bibinfo {pages} {752} (\bibinfo {year}
  {1970})}\BibitemShut {NoStop}%
\bibitem [{\citenamefont {Kasami}\ \emph {et~al.}(1976)\citenamefont {Kasami},
  \citenamefont {Tokura},\ and\ \citenamefont {Azumi}}]{kasami1976weight}%
  \BibitemOpen
  \bibfield  {author} {\bibinfo {author} {\bibfnamefont {T.}~\bibnamefont
  {Kasami}}, \bibinfo {author} {\bibfnamefont {N.}~\bibnamefont {Tokura}},\
  and\ \bibinfo {author} {\bibfnamefont {S.}~\bibnamefont {Azumi}},\ }\href
  {https://doi.org/10.1016/S0019-9958(76)90355-7} {\bibfield  {journal}
  {\bibinfo  {journal} {Information and control}\ }\textbf {\bibinfo {volume}
  {30}},\ \bibinfo {pages} {380} (\bibinfo {year} {1976})}\BibitemShut
  {NoStop}%
\bibitem [{\citenamefont {Knill}(2004)}]{knill2004}%
  \BibitemOpen
  \bibfield  {author} {\bibinfo {author} {\bibfnamefont {E.}~\bibnamefont
  {Knill}},\ }\href@noop {} {\bibfield  {journal} {\bibinfo  {journal} {.}\ }
  (\bibinfo {year} {2004})},\ \Eprint {https://arxiv.org/abs/quant-ph/0402171}
  {arXiv:quant-ph/0402171} \BibitemShut {NoStop}%
\bibitem [{\citenamefont {Bravyi}\ and\ \citenamefont
  {Kitaev}(2005)}]{bravyi2005universal}%
  \BibitemOpen
  \bibfield  {author} {\bibinfo {author} {\bibfnamefont {S.}~\bibnamefont
  {Bravyi}}\ and\ \bibinfo {author} {\bibfnamefont {A.}~\bibnamefont
  {Kitaev}},\ }\href {https://doi.org/10.1103/PhysRevA.71.022316} {\bibfield
  {journal} {\bibinfo  {journal} {Phys. Rev. A}\ }\textbf {\bibinfo {volume}
  {71}},\ \bibinfo {pages} {022316} (\bibinfo {year} {2005})},\ \Eprint
  {https://arxiv.org/abs/quant-ph/0403025} {arXiv:quant-ph/0403025}
  \BibitemShut {NoStop}%
\bibitem [{\citenamefont {Karzig}\ \emph {et~al.}(2017)\citenamefont {Karzig},
  \citenamefont {Knapp}, \citenamefont {Lutchyn}, \citenamefont {Bonderson},
  \citenamefont {Hastings}, \citenamefont {Nayak}, \citenamefont {Alicea},
  \citenamefont {Flensberg}, \citenamefont {Plugge}, \citenamefont {Oreg} \emph
  {et~al.}}]{karzig2017scalable}%
  \BibitemOpen
  \bibfield  {author} {\bibinfo {author} {\bibfnamefont {T.}~\bibnamefont
  {Karzig}}, \bibinfo {author} {\bibfnamefont {C.}~\bibnamefont {Knapp}},
  \bibinfo {author} {\bibfnamefont {R.~M.}\ \bibnamefont {Lutchyn}}, \bibinfo
  {author} {\bibfnamefont {P.}~\bibnamefont {Bonderson}}, \bibinfo {author}
  {\bibfnamefont {M.~B.}\ \bibnamefont {Hastings}}, \bibinfo {author}
  {\bibfnamefont {C.}~\bibnamefont {Nayak}}, \bibinfo {author} {\bibfnamefont
  {J.}~\bibnamefont {Alicea}}, \bibinfo {author} {\bibfnamefont
  {K.}~\bibnamefont {Flensberg}}, \bibinfo {author} {\bibfnamefont
  {S.}~\bibnamefont {Plugge}}, \bibinfo {author} {\bibfnamefont
  {Y.}~\bibnamefont {Oreg}}, \emph {et~al.},\ }\href
  {https://doi.org/10.1103/PhysRevB.95.235305} {\bibfield  {journal} {\bibinfo
  {journal} {Phys. Rev. B}\ }\textbf {\bibinfo {volume} {95}},\ \bibinfo
  {pages} {235305} (\bibinfo {year} {2017})},\ \Eprint
  {https://arxiv.org/abs/1610.05289} {arXiv:1610.05289} \BibitemShut {NoStop}%
\bibitem [{\citenamefont {Chamberland}\ \emph {et~al.}(2022)\citenamefont
  {Chamberland}, \citenamefont {Noh}, \citenamefont {Arrangoiz-Arriola},
  \citenamefont {Campbell}, \citenamefont {Hann}, \citenamefont {Iverson},
  \citenamefont {Putterman}, \citenamefont {Bohdanowicz}, \citenamefont
  {Flammia}, \citenamefont {Keller}, \citenamefont {Refael}, \citenamefont
  {Preskill}, \citenamefont {Jiang}, \citenamefont {Safavi-Naeini},
  \citenamefont {Painter},\ and\ \citenamefont {Brandão}}]{Chamberland2020}%
  \BibitemOpen
  \bibfield  {author} {\bibinfo {author} {\bibfnamefont {C.}~\bibnamefont
  {Chamberland}}, \bibinfo {author} {\bibfnamefont {K.}~\bibnamefont {Noh}},
  \bibinfo {author} {\bibfnamefont {P.}~\bibnamefont {Arrangoiz-Arriola}},
  \bibinfo {author} {\bibfnamefont {E.~T.}\ \bibnamefont {Campbell}}, \bibinfo
  {author} {\bibfnamefont {C.~T.}\ \bibnamefont {Hann}}, \bibinfo {author}
  {\bibfnamefont {J.}~\bibnamefont {Iverson}}, \bibinfo {author} {\bibfnamefont
  {H.}~\bibnamefont {Putterman}}, \bibinfo {author} {\bibfnamefont {T.~C.}\
  \bibnamefont {Bohdanowicz}}, \bibinfo {author} {\bibfnamefont {S.~T.}\
  \bibnamefont {Flammia}}, \bibinfo {author} {\bibfnamefont {A.}~\bibnamefont
  {Keller}}, \bibinfo {author} {\bibfnamefont {G.}~\bibnamefont {Refael}},
  \bibinfo {author} {\bibfnamefont {J.}~\bibnamefont {Preskill}}, \bibinfo
  {author} {\bibfnamefont {L.}~\bibnamefont {Jiang}}, \bibinfo {author}
  {\bibfnamefont {A.~H.}\ \bibnamefont {Safavi-Naeini}}, \bibinfo {author}
  {\bibfnamefont {O.}~\bibnamefont {Painter}},\ and\ \bibinfo {author}
  {\bibfnamefont {F.~G. S.~L.}\ \bibnamefont {Brandão}},\ }\href
  {https://doi.org/10.1103/PRXQuantum.3.010329} {\bibfield  {journal} {\bibinfo
   {journal} {PRX Quantum}\ }\textbf {\bibinfo {volume} {3}},\ \bibinfo {pages}
  {010329} (\bibinfo {year} {2022})},\ \Eprint
  {https://arxiv.org/abs/2012.04108} {arXiv:2012.04108} \BibitemShut {NoStop}%
\bibitem [{\citenamefont {Beverland}\ \emph {et~al.}(2021)\citenamefont
  {Beverland}, \citenamefont {Kubica},\ and\ \citenamefont
  {Svore}}]{Beverland2021}%
  \BibitemOpen
  \bibfield  {author} {\bibinfo {author} {\bibfnamefont {M.~E.}\ \bibnamefont
  {Beverland}}, \bibinfo {author} {\bibfnamefont {A.}~\bibnamefont {Kubica}},\
  and\ \bibinfo {author} {\bibfnamefont {K.~M.}\ \bibnamefont {Svore}},\ }\href
  {https://doi.org/10.1103/PRXQuantum.2.020341} {\bibfield  {journal} {\bibinfo
   {journal} {PRX Quantum}\ }\textbf {\bibinfo {volume} {2}},\ \bibinfo {pages}
  {020341} (\bibinfo {year} {2021})},\ \Eprint
  {https://arxiv.org/abs/2101.02211} {arXiv:2101.02211} \BibitemShut {NoStop}%
\bibitem [{\citenamefont {Bombin}\ \emph {et~al.}()\citenamefont {Bombin},
  \citenamefont {Dawson}, \citenamefont {Mishmash}, \citenamefont {Nickerson},
  \citenamefont {Pastawski},\ and\ \citenamefont {Roberts}}]{Bombin2021}%
  \BibitemOpen
  \bibfield  {author} {\bibinfo {author} {\bibfnamefont {H.}~\bibnamefont
  {Bombin}}, \bibinfo {author} {\bibfnamefont {C.}~\bibnamefont {Dawson}},
  \bibinfo {author} {\bibfnamefont {R.~V.}\ \bibnamefont {Mishmash}}, \bibinfo
  {author} {\bibfnamefont {N.}~\bibnamefont {Nickerson}}, \bibinfo {author}
  {\bibfnamefont {F.}~\bibnamefont {Pastawski}},\ and\ \bibinfo {author}
  {\bibfnamefont {S.}~\bibnamefont {Roberts}},\ }\href@noop {} {\ }\Eprint
  {https://arxiv.org/abs/2112.12160} {arXiv:2112.12160} \BibitemShut {NoStop}%
\bibitem [{\citenamefont {Litinski}(2019)}]{Litinski2019}%
  \BibitemOpen
  \bibfield  {author} {\bibinfo {author} {\bibfnamefont {D.}~\bibnamefont
  {Litinski}},\ }\href {https://doi.org/10.22331/q-2019-12-02-205} {\bibfield
  {journal} {\bibinfo  {journal} {Quantum}\ }\textbf {\bibinfo {volume} {3}},\
  \bibinfo {pages} {205} (\bibinfo {year} {2019})},\ \Eprint
  {https://arxiv.org/abs/1905.06903} {arXiv:1905.06903} \BibitemShut {NoStop}%
\bibitem [{\citenamefont {Jones}(2013)}]{jones2013multilevel}%
  \BibitemOpen
  \bibfield  {author} {\bibinfo {author} {\bibfnamefont {C.}~\bibnamefont
  {Jones}},\ }\href {https://doi.org/10.1103/PhysRevA.87.042305} {\bibfield
  {journal} {\bibinfo  {journal} {Physical Review A}\ }\textbf {\bibinfo
  {volume} {87}},\ \bibinfo {pages} {042305} (\bibinfo {year} {2013})},\
  \Eprint {https://arxiv.org/abs/1210.3388} {arXiv:1210.3388} \BibitemShut
  {NoStop}%
\bibitem [{\citenamefont {Paetznick}\ and\ \citenamefont
  {Reichardt}(2013)}]{paetznick2013}%
  \BibitemOpen
  \bibfield  {author} {\bibinfo {author} {\bibfnamefont {A.}~\bibnamefont
  {Paetznick}}\ and\ \bibinfo {author} {\bibfnamefont {B.~W.}\ \bibnamefont
  {Reichardt}},\ }\href {https://doi.org/10.1103/PhysRevLett.111.090505}
  {\bibfield  {journal} {\bibinfo  {journal} {Phys. Rev. Lett.}\ }\textbf
  {\bibinfo {volume} {111}},\ \bibinfo {pages} {090505} (\bibinfo {year}
  {2013})},\ \Eprint {https://arxiv.org/abs/1304.3709} {arXiv:1304.3709}
  \BibitemShut {NoStop}%
\bibitem [{\citenamefont {Haah}\ \emph {et~al.}(2017)\citenamefont {Haah},
  \citenamefont {Hastings}, \citenamefont {Poulin},\ and\ \citenamefont
  {Wecker}}]{haah2017magic}%
  \BibitemOpen
  \bibfield  {author} {\bibinfo {author} {\bibfnamefont {J.}~\bibnamefont
  {Haah}}, \bibinfo {author} {\bibfnamefont {M.~B.}\ \bibnamefont {Hastings}},
  \bibinfo {author} {\bibfnamefont {D.}~\bibnamefont {Poulin}},\ and\ \bibinfo
  {author} {\bibfnamefont {D.}~\bibnamefont {Wecker}},\ }\href
  {https://doi.org/10.22331/q-2017-10-03-31} {\bibfield  {journal} {\bibinfo
  {journal} {Quantum}\ }\textbf {\bibinfo {volume} {1}},\ \bibinfo {pages} {31}
  (\bibinfo {year} {2017})},\ \Eprint {https://arxiv.org/abs/1703.07847}
  {arXiv:1703.07847} \BibitemShut {NoStop}%
\bibitem [{\citenamefont {Haah}\ \emph {et~al.}(2018)\citenamefont {Haah},
  \citenamefont {Hastings}, \citenamefont {Poulin},\ and\ \citenamefont
  {Wecker}}]{haah2017magic2}%
  \BibitemOpen
  \bibfield  {author} {\bibinfo {author} {\bibfnamefont {J.}~\bibnamefont
  {Haah}}, \bibinfo {author} {\bibfnamefont {M.~B.}\ \bibnamefont {Hastings}},
  \bibinfo {author} {\bibfnamefont {D.}~\bibnamefont {Poulin}},\ and\ \bibinfo
  {author} {\bibfnamefont {D.}~\bibnamefont {Wecker}},\ }\href
  {https://doi.org/10.26421/QIC18.1-2} {\bibfield  {journal} {\bibinfo
  {journal} {Quantum Information and Computation}\ }\textbf {\bibinfo {volume}
  {18 (1\&2)}},\ \bibinfo {pages} {0114} (\bibinfo {year} {2018})},\ \Eprint
  {https://arxiv.org/abs/1709.02789} {arXiv:1709.02789} \BibitemShut {NoStop}%
\bibitem [{\citenamefont {Campbell}\ and\ \citenamefont
  {Howard}(2017)}]{campbell2017unifying}%
  \BibitemOpen
  \bibfield  {author} {\bibinfo {author} {\bibfnamefont {E.~T.}\ \bibnamefont
  {Campbell}}\ and\ \bibinfo {author} {\bibfnamefont {M.}~\bibnamefont
  {Howard}},\ }\href {https://doi.org/10.1103/PhysRevA.95.022316} {\bibfield
  {journal} {\bibinfo  {journal} {Phys. Rev. A}\ }\textbf {\bibinfo {volume}
  {95}},\ \bibinfo {pages} {022316} (\bibinfo {year} {2017})},\ \Eprint
  {https://arxiv.org/abs/1606.01904} {arXiv:1606.01904} \BibitemShut {NoStop}%
\bibitem [{\citenamefont {Haah}\ and\ \citenamefont
  {Hastings}(2018)}]{haah2018codes}%
  \BibitemOpen
  \bibfield  {author} {\bibinfo {author} {\bibfnamefont {J.}~\bibnamefont
  {Haah}}\ and\ \bibinfo {author} {\bibfnamefont {M.~B.}\ \bibnamefont
  {Hastings}},\ }\href {https://doi.org/10.22331/q-2018-06-07-71} {\bibfield
  {journal} {\bibinfo  {journal} {Quantum}\ }\textbf {\bibinfo {volume} {2}},\
  \bibinfo {pages} {71} (\bibinfo {year} {2018})},\ \Eprint
  {https://arxiv.org/abs/1709.02832} {arXiv:1709.02832} \BibitemShut {NoStop}%
\bibitem [{\citenamefont {Bravyi}\ and\ \citenamefont
  {Haah}(2012)}]{bravyi2012magic}%
  \BibitemOpen
  \bibfield  {author} {\bibinfo {author} {\bibfnamefont {S.}~\bibnamefont
  {Bravyi}}\ and\ \bibinfo {author} {\bibfnamefont {J.}~\bibnamefont {Haah}},\
  }\href {https://doi.org/10.1103/PhysRevA.86.052329} {\bibfield  {journal}
  {\bibinfo  {journal} {Physical Review A}\ }\textbf {\bibinfo {volume} {86}},\
  \bibinfo {pages} {052329} (\bibinfo {year} {2012})},\ \Eprint
  {https://arxiv.org/abs/1209.2426} {arXiv:1209.2426} \BibitemShut {NoStop}%
\bibitem [{\citenamefont {Haah}(2018)}]{haah2018towers}%
  \BibitemOpen
  \bibfield  {author} {\bibinfo {author} {\bibfnamefont {J.}~\bibnamefont
  {Haah}},\ }\href {https://doi.org/10.1103/PhysRevA.97.042327} {\bibfield
  {journal} {\bibinfo  {journal} {Physical Review A}\ }\textbf {\bibinfo
  {volume} {97}},\ \bibinfo {pages} {042327} (\bibinfo {year} {2018})},\
  \Eprint {https://arxiv.org/abs/1709.08658} {arXiv:1709.08658} \BibitemShut
  {NoStop}%
\bibitem [{\citenamefont {Hastings}\ and\ \citenamefont
  {Haah}(2018)}]{hastings2018distillation}%
  \BibitemOpen
  \bibfield  {author} {\bibinfo {author} {\bibfnamefont {M.~B.}\ \bibnamefont
  {Hastings}}\ and\ \bibinfo {author} {\bibfnamefont {J.}~\bibnamefont
  {Haah}},\ }\href {https://doi.org/10.1103/PhysRevLett.120.050504} {\bibfield
  {journal} {\bibinfo  {journal} {Phys. Rev. Lett.}\ }\textbf {\bibinfo
  {volume} {120}},\ \bibinfo {pages} {050504} (\bibinfo {year} {2018})},\
  \Eprint {https://arxiv.org/abs/1709.03543} {arXiv:1709.03543} \BibitemShut
  {NoStop}%
\bibitem [{git(tory)}]{github}%
  \BibitemOpen
  \href@noop {} {}\bibinfo {howpublished}
  {\href{https://github.com/sgnez/Tri\_from\_RM}{https://github.com/sgnez/Tri\_from\_RM}}
  (\bibinfo {year} {{Tri\_from\_RM} GitHub repository}),\ \bibinfo {note}
  {{Sagemath} notebook for constructing triorthogonal codes from the
  Reed-Muller codes, and a fast multithread C++ code used for classifying low
  weight Reed-Muller polynomials with 8 variables.}\BibitemShut {Stop}%
\bibitem [{\citenamefont {Rengaswamy}\ \emph {et~al.}(2020)\citenamefont
  {Rengaswamy}, \citenamefont {Calderbank}, \citenamefont {Newman},\ and\
  \citenamefont {Pfister}}]{rengaswamy2020optimality}%
  \BibitemOpen
  \bibfield  {author} {\bibinfo {author} {\bibfnamefont {N.}~\bibnamefont
  {Rengaswamy}}, \bibinfo {author} {\bibfnamefont {R.}~\bibnamefont
  {Calderbank}}, \bibinfo {author} {\bibfnamefont {M.}~\bibnamefont {Newman}},\
  and\ \bibinfo {author} {\bibfnamefont {H.~D.}\ \bibnamefont {Pfister}},\
  }\href {https://doi.org/10.1109/JSAIT.2020.3012914} {\bibfield  {journal}
  {\bibinfo  {journal} {IEEE Journal on Selected Areas in Information Theory}\
  }\textbf {\bibinfo {volume} {1}},\ \bibinfo {pages} {499} (\bibinfo {year}
  {2020})},\ \Eprint {https://arxiv.org/abs/1910.09333} {arXiv:1910.09333}
  \BibitemShut {NoStop}%
\bibitem [{\citenamefont {Taylor}(2009)}]{Taylor2008}%
  \BibitemOpen
  \bibfield  {author} {\bibinfo {author} {\bibfnamefont {L.~R.}\ \bibnamefont
  {Taylor}},\ }\href {https://www.jstor.org/stable/20535840} {\bibfield
  {journal} {\bibinfo  {journal} {Proc. Amer. Math. Soc.}\ }\textbf {\bibinfo
  {volume} {137}},\ \bibinfo {pages} {1135} (\bibinfo {year} {2009})},\ \Eprint
  {https://arxiv.org/abs/0802.0111} {arXiv:0802.0111} \BibitemShut {NoStop}%
\bibitem [{\citenamefont {Albert}(1938)}]{albert1938symmetric}%
  \BibitemOpen
  \bibfield  {author} {\bibinfo {author} {\bibfnamefont {A.~A.}\ \bibnamefont
  {Albert}},\ }\href {https://doi.org/10.1090/S0002-9947-1938-1501952-6}
  {\bibfield  {journal} {\bibinfo  {journal} {Transactions of the American
  Mathematical Society}\ }\textbf {\bibinfo {volume} {43}},\ \bibinfo {pages}
  {386} (\bibinfo {year} {1938})}\BibitemShut {NoStop}%
\bibitem [{\citenamefont {Lempel}(1975)}]{lempel1975matrix}%
  \BibitemOpen
  \bibfield  {author} {\bibinfo {author} {\bibfnamefont {A.}~\bibnamefont
  {Lempel}},\ }\href {https://doi.org/10.1137/0204014} {\bibfield  {journal}
  {\bibinfo  {journal} {SIAM Journal on Computing}\ }\textbf {\bibinfo {volume}
  {4}},\ \bibinfo {pages} {175} (\bibinfo {year} {1975})}\BibitemShut {NoStop}%
\bibitem [{\citenamefont {Janusz}(2007)}]{janusz2007parametrization}%
  \BibitemOpen
  \bibfield  {author} {\bibinfo {author} {\bibfnamefont {G.~J.}\ \bibnamefont
  {Janusz}},\ }\href@noop {} {\bibfield  {journal} {\bibinfo  {journal} {Finite
  Fields and Their Applications}\ }\textbf {\bibinfo {volume} {13}},\ \bibinfo
  {pages} {450} (\bibinfo {year} {2007})}\BibitemShut {NoStop}%
\bibitem [{\citenamefont {O'Meara}(1978)}]{OMeara}%
  \BibitemOpen
  \bibfield  {author} {\bibinfo {author} {\bibfnamefont {O.~T.}\ \bibnamefont
  {O'Meara}},\ }\href@noop {} {\emph {\bibinfo {title} {Symplectic Groups}}}\
  (\bibinfo  {publisher} {American Mathematical Society},\ \bibinfo {year}
  {1978})\BibitemShut {NoStop}%
\bibitem [{\citenamefont {Dehaene}\ and\ \citenamefont {Moor}(2003)}]{DM}%
  \BibitemOpen
  \bibfield  {author} {\bibinfo {author} {\bibfnamefont {J.}~\bibnamefont
  {Dehaene}}\ and\ \bibinfo {author} {\bibfnamefont {B.~D.}\ \bibnamefont
  {Moor}},\ }\href {https://doi.org/10.1103/PhysRevA.68.042318} {\bibfield
  {journal} {\bibinfo  {journal} {Phys. Rev. A}\ }\textbf {\bibinfo {volume}
  {68}},\ \bibinfo {pages} {042318} (\bibinfo {year} {2003})},\ \Eprint
  {https://arxiv.org/abs/quant-ph/0304125} {arXiv:quant-ph/0304125}
  \BibitemShut {NoStop}%
\bibitem [{\citenamefont {Horsman}\ \emph {et~al.}(2012)\citenamefont
  {Horsman}, \citenamefont {Fowler}, \citenamefont {Devitt},\ and\
  \citenamefont {Meter}}]{Horsman2012}%
  \BibitemOpen
  \bibfield  {author} {\bibinfo {author} {\bibfnamefont {C.}~\bibnamefont
  {Horsman}}, \bibinfo {author} {\bibfnamefont {A.~G.}\ \bibnamefont {Fowler}},
  \bibinfo {author} {\bibfnamefont {S.}~\bibnamefont {Devitt}},\ and\ \bibinfo
  {author} {\bibfnamefont {R.~V.}\ \bibnamefont {Meter}},\ }\href
  {https://doi.org/10.1088/1367-2630/14/12/123011} {\bibfield  {journal}
  {\bibinfo  {journal} {New J. Phys.}\ }\textbf {\bibinfo {volume} {14}},\
  \bibinfo {pages} {123011} (\bibinfo {year} {2012})},\ \Eprint
  {https://arxiv.org/abs/1111.4022} {arXiv:1111.4022} \BibitemShut {NoStop}%
\end{thebibliography}%

\end{document}